\theoremstyle{plain} \newtheorem{thm}{Theorem}
\theoremstyle{plain} \newtheorem{lem}{Lemma}
\theoremstyle{plain} 
\theoremstyle{plain} 
\newcommand{\sref}[1]{\StrCut{#1}{:}\pre\post%
                        \IfStrEqCase{\pre}{%
                           {alg}{\hyperref[#1]{Algorithm~\ref*{#1}}}%
                           {lem}{\hyperref[#1]{Lemma~\ref*{#1}}}%
                           {thm}{\hyperref[#1]{Theorem~\ref*{#1}}}%
                           {ln}{\hyperref[#1]{Line~\ref*{#1}}}%
                           {cor}{\hyperref[#1]{Corollary~\ref*{#1}}}%
                           {fig}{\hyperref[#1]{Figure~\ref*{#1}}}%
                           {sec}{\hyperref[#1]{Section~\ref*{#1}}}%
                           {app}{\hyperref[#1]{Appendix~\ref*{#1}}}%
                         }[ss]%
                       }
\newcommand{\mt}{\mathit}
\setlist[enumerate]{itemsep=1pt,parsep=0pt,before={\parskip=0pt}}
\begin{document}%\layout

\title{Byzantine Agreement with Unknown Participants and Failures}
%\date{}
% \author{\IEEEauthorblockN{Pankaj Khanchandani}
% \IEEEauthorblockA{ETH Zurich\\
% Zurich, Switzerland\\
% kpankaj@ethz.ch}
% \and
% \IEEEauthorblockN{Roger Wattenhofer}
% \IEEEauthorblockA{ETH Zurich\\
% Zurich, Switzerland\\
% wattenhofer@ethz.ch}
% }

%\thispagestyle{empty}
\author{\IEEEauthorblockN{Pankaj Khanchandani}
\IEEEauthorblockA{\textit{ETH Zurich} \\
Zurich, Switzerland \\
kpankaj@ethz.ch}
\and
\IEEEauthorblockN{Roger Wattenhofer}
\IEEEauthorblockA{\textit{ETH Zurich} \\
Zurich, Switzerland \\
wattenhofer@ethz.ch}
}
\maketitle

\begin{abstract}
    A set of mutually distrusting participants that want to agree on a common opinion must solve an instance of a Byzantine agreement problem. These problems have been extensively studied in the literature. However, most of the existing solutions assume that the participants are aware of $n$ --- the total number of participants in the system --- and $f$ --- an upper bound on the number of Byzantine participants. In this paper, we show that most of the fundamental agreement  problems can be solved without affecting resiliency even if the participants do not know the values of (possibly changing) $n$ and $f$. Specifically, we consider a synchronous system where the participants have unique but not necessarily consecutive identifiers, and give Byzantine agreement algorithms for  reliable broadcast, approximate agreement, rotor-coordinator, early terminating consensus and total ordering in static and dynamic systems, all with the optimal resiliency of $n> 3f$. Moreover, we show that synchrony is necessary as an agreement with probabilistic termination is impossible in a semi-synchronous or asynchronous system if the participants are unaware of $n$ and $f$.
\end{abstract}

%\begin{IEEEkeywords}
%approximate agreement, Byzantine agreement, consensus, reliable broadcast, blockchain, dynamic networks.
%\end{IEEEkeywords}
%\IEEEpeerreviewmaketitle

\section{Introduction}

Many modern networks have to be always available and it may not be possible to
know the size of the network or the number of failures  in advance, since they
may change over time.
Consider, for example, a database cluster that requires frequent node scaling because of changing load, or a wireless sensor network that experiences a changing number of faulty or disconnected nodes over time. Nakamoto's blockchain \cite{bitcoin} is a prominent example where the network is permissionless, i.e., the network is open to \textit{any} number of nodes. So, the number of participants and consequently, the number of failures also change over time. Agreement is a fundamental distributed computing primitive for fault-tolerant networks, however, much of the existing literature assumes that the size \(n\) of the network and/or the upper bound \(f\) on the number of failures  is known to every node \cite{attiya:book, aspnes:notes, charron-bost:approxCon, tseng:conDirectGraphs, mendes:multidimCon}.

In this paper, we consider fault-prone systems where the nodes do not know the number  of nodes \(n\) and the maximum number of Byzantine nodes \(f\) and study fundamental agreement problems for such systems, in particular: 
\begin{itemize}
    \setlength\itemsep{0em}
  \item Reliable broadcast --- ensures that a message is either accepted by every correct node or no correct node \cite{srikanth:reliable};
  \item Rotor-coordinator --- selects \(f + 1\) leaders for the correct nodes;
  \item Consensus --- every correct node has a binary input and the correct nodes output a common binary value that is an input of some correct node \cite{lamport:byz}; 
%  \item Renaming --- the correct nodes choose their new identifiers from a smaller range of identifiers maintaining their uniqueness \cite{attiya:renaming};
    \item Approximate agreement --- each correct node has a real number input and has to output a real number that is \emph{strictly} within the correct input values \cite{approxAgree};
    \item Total ordering --- each correct node maintains a total order on the system events while participants may enter and leave subject to $n>3f$.
\end{itemize}
Since a correct node does not know \(n\) and \(f\) and a Byzantine node may not
announce itself to everyone, there might be more Byzantine nodes in the system
than what a correct node thinks. Thus, it may not be possible to achieve a
resiliency of \(n > 3f\), which can be achieved when the nodes know \(n\) and
\(f\).
When \(f\) is known and the identifiers are consecutive, it is easy to agree on
a set of \(f+1\) nodes, and consequently ensure the presence of a single correct
leader node in the set.
We show, however, that these problems can be solved without affecting
resiliency even when \(n\) and \(f\) are not known.
Specifically, we give algorithms for solving the above problems in synchronous systems with the resiliency of \(n > 3f\), which is optimal for approximate agreement, reliable broadcast and consensus. We also show that the synchronous assumption is necessary as otherwise the problem is impossible and there is non-zero probability of terminating with a disagreement.

An advantage of designing algorithms without the knowledge of \(n\) and \(f\) is
their application to networks where the set of participants change over time. We
illustrate this by extending some of our algorithms to solve Byzantine agreement in dynamic networks. In case of dynamic networks, single shot problems where a node acts on one input and terminates with one output are not very useful. So, we consider an agreement task where nodes are required to totally order the events in a system and design an algorithm for that task.

%We focus on the synchronous model. Many of the most popular protocols in the wild may seem asynchronous at first sight, as they are hiding their synchronous core in fail-over procedures that are fired when a message is not received before a \textit{timeout} happens. Also, we need to distinguish nodes from each other, so we assume authenticated channels. We will explain our model more formally in Section 3.

\section{Related Work}
If the nodes do not known \(n\) and \(f\), then the synchronous assumption is necessary. Otherwise, if the network is asynchronous and the message delays are unbounded, agreement is impossible even with probabilistic termination, as we show later. There is a line of work that deals with this problem using oracles or failure detectors \cite{cavin:unknownParticipants,  greve:knowConnect, alchieri:byzUnknownParticipants}. 
The idea is that a failure detector supplies information about the number of participants. But, these works also assume that every node knows \(f\). In \cite{taheri:unknownDynamicByzCon}, the authors consider an asynchronous dynamic system with a failure detector where \(n\) and \(f\) are unknown, but the failure detector assumed eventually removes the Byzantine nodes. 
%The rotor-coordinator algorithm with unknown $n$ and $f$ has been explored recently as an in-progress work leaving other questions and impossibility results open.

Several ways of improving the robustness of synchronous systems with Byzantine failures have also been explored. For example, Gallet et al.~\cite{gallet:byzAgreeHomo} examine a system that can allocate the same identifier to multiple nodes. In \cite{banu:mobile, garay:byzMobile, bonomi:aaMobile}, the authors examine a synchronous system with mobile Byzantine faults --- those which hop from one node to another across rounds. In \cite{lenzen:ssCount, lenzen:ssClockConsensus}, the authors consider self-stabilizing agreement problems in the presence of Byzantine faults, i.e., the correct nodes have to recover from arbitrary initial state even when the other Byzantine nodes maliciously prevent the correct nodes from recovering. In \cite{garg:weigthed}, the machines are assigned weights and the total weight of the faulty machines is less than a third of the total weight.

The Byzantine agreement problems have a long history since the work by Lamport et al.~\cite{lamport:byz}. They gave a \(f+1\) round algorithm with exponential in \(n\) message complexity for \(n>3f\). They also showed that the resilience of \(n>3f\) is optimal. Berman et al.~\cite{king} later improved the message complexity to polynomial in \(n\), while keeping optimal resilience, and increasing the number of rounds by a small constant. Garay et al.~\cite{garay:optSyncBA} further improved the round complexity to exactly \(f+1\), while retaining optimal resilience and polynomial message complexity. The algorithm by Berman et al.~\cite{king} is well known as the \textit{king} algorithm and is still commonly used \cite{lenzen:ssCount, dolev:synCount, afek:objCon}.
The approximate agreement algorithm was introduced by Dolev et
al.~\cite{approxAgree} and is a useful primitive in designing distributed
algorithms \cite{welch:clockSync, dolve:ssClockSync}.%, vaidya:approxDirected}.
It also requires \(n>3f\) and has optimal resiliency \cite{fischer:easyImposs}. Srikanth et al.~\cite{srikanth:reliable} introduced the reliable broadcast abstraction and its use in dealing with Byzantine failures for \(n>3f\). As they remark, this resiliency is optimal as the reliable broadcast abstraction can be used to solve consensus. The algorithms for approximate agreement, reliable broadcast and consensus in this paper generalize the ones from  Dolev et al.~\cite{approxAgree}, Srikanth et al.~\cite{srikanth:reliable} and Berman et al.~\cite{king} respectively. 

%Okun et al.~\cite{okun:byzRename} introduce the renaming problem with Byzantine failures; the renaming problem itself was introduced by Attiya et al.~\cite{attiya:renaming} for asynchronous systems with crash failures. Okun et al.~show that the problem cannot be solved if \((n + n \bmod 3) \leq 3f\), the number of rounds taken depends only on \(n\) and the Byzantine nodes can forge their identifiers. In the renaming problem, a node is only required to output its new name. It can do more by outputting the new names of all the correct nodes, which we also do in this work, and for this problem it is not hard to show that \(n > 3f\) is the optimal resilience. In \cite{denysyuk:orderRename}, the authors give order-preserving and faster renaming algorithms, namely with \(O(\log n)\) rounds and constant number of rounds, trading it off with a larger range of new identifiers. 

A rotor-coordinator, as also used in~\cite{king}, is an approach to deal with at most \(f\) Byzantine faults by rotating through \(f+1\) coordinator nodes, thus ensuring that one coordinator would be correct. The rotor-coordinator can be easily implemented by rotating through \(f+1\) nodes when \(f\) is known and the identifiers are consecutive. However, it is one of the main bottleneck when \(n\) and \(f\) are unknown and the identifiers are also non-consecutive.

\section{Significance of this Work}
It is not so difficult to observe that if all the correct nodes broadcast in a round, then each correct node \(v\) receives less than \(n_v/3\) messages from the Byzantine nodes --- where \(n_v\) is the number of messages received by the node \(v\) --- irrespective of whether the Byzantine nodes broadcast or not. This observation helps in removing dependency on \(n\) and \(f\) from the classic known algorithms. However, this observation is not sufficient by itself. Many of the classic algorithms run for fixed \(f + 1\) rounds, selecting a different leader in each round. This is a non-trivial problem in our setting, since \(f\) is not a common knowledge and also the identifiers are not consecutive. \sref{alg:rotor} for rotor-coordinator essentially solves this problem. 

The classical models studied in the literature do not allow the Byzantine nodes to lie about the number of participants in the network since it is assumed to be known by every node. Our system model allows the Byzantine nodes to send conflicting information so that the correct nodes never have a consistent information about the number of participants. Therefore, the algorithms designed are robust against a wider range of malicious behavior. This is especially useful for large dynamic systems where it may not be possible to consistently initialize every node with the value of \(n\) and \(f\).

On the other hand, participants are assumed to have access to consistent clocks
after initialization, since the computation is assumed to proceed in rounds. So,
some consistent initialization (synchronization) is still needed. This is
somewhat necessary, since we also show that Byzantine consensus cannot be solved
with probabilistic termination if the system is semi-synchronous or asynchronous and the participants do not know \(n\) and
\(f\). This implies that
it is impossible to build blockchain systems for solving agreement problems in  asynchronous networks when
\(n\) and \(f\)  are not known.
% observation: 
% not suffices
% implies that the byzantine can do more but does not affect the resiliency and the resulting system is more robust 

\section{Model}
The system consists of \(n\) nodes, out of which at most \(f\) are faulty nodes. The faulty nodes can behave in anyway whatsoever, also known as \emph{Byzantine} behavior. We call the non-Byzantine nodes \textit{correct}. The nodes have unique identifiers, which are not necessarily consecutive. Each node knows its identifier only at initialization  apart from a possible input and does not know any global information like \(n\) or \(f\).
The system is \emph{synchronous} and the computation proceeds in \emph{rounds}. In each round, every node receives the messages that were sent to it in the previous round, does some local computations, and then sends again messages to the other nodes to be consumed in the following round. A correct node can broadcast a message to all the nodes or send a message to a specific node that sent a message to the node before. The identifier of a node is included in the message it sends so the receiver of the message can decipher its sender. Thus, a Byzantine node cannot forge its identifier when communicating directly. However, it can help other Byzantine nodes to do so indirectly by claiming to have received messages from other, possible non-existent, nodes. Byzantine nodes can send duplicate messages across rounds but duplicate messages from the same node in a round are simply discarded.

Note that the only way for a correct node to know about the existence of another node is to receive a message from that node. A Byzantine node may get itself known to only a subset of nodes, however, it can behave as if it already knows all the nodes without having received messages from them. In the rest of the paper, we will sometimes refer to the above model as the \emph{id-only} model for brevity. We give the following algorithms  in the \emph{id-only} model for \(n > 3f\):  reliable broadcast in \sref{sec:relBcast}, rotor-coordinator in \sref{sec:rotor}, consensus in \sref{sec:con}, and approximate agreement in \sref{sec:approxAgree}. In \sref{sec:sync}, we show that to solve agreement with probabilistic termination, when \(n\) and \(f\) are unknown, synchronous assumption is necessary. 
In \sref{sec:earlycon}, we give a parallel version of the  consensus algorithm,
where several consensus algorithms can be run in parallel, however, the nodes do
not initally agree on the instances to start.
In \sref{sec:dynamicNetworks}, we build on the parallel consensus to give algorithms for achieving approximate agreement and total ordering of events in a dynamic network.
In \sref{sec:disc}, we discuss the results and some further interesting questions.

\section{Reliable Broadcast}\label{sec:relBcast}

Reliable broadcast \cite{srikanth:reliable} is an abstraction to deal with the messages sent by the Byzantine nodes. The idea is to enforce that a Byzantine node cannot send contradictory information to different nodes. It can still send around false information but the abstraction ensures that the same false information is seen by all the correct nodes. Concretely, let \(s\) be a designated node that may or may not be correct and \((m,s)\) be a message broadcast by \(s\). The message \((m,s)\) is \emph{reliably broadcast} when the following three properties are satisfied. 
\begin{enumerate}
    \item Correctness: If \(s\) is correct, then each correct node accepts \((m,s)\).
    \item Unforgeability: If a correct node accepts a message \((m,s)\) and \(s\) is a correct node, then the message \((m,s)\) was broadcast or sent to all the nodes by the node \(s\).
    %If \(s\) is correct and does not send \((m,s)\), then no correct process accepts \((m,s)\). 
    \item Relay: If a correct node accepts the message \((m,s)\) in a round \(r\), then each correct node accepts the message \((m,s)\) by the round \(r + 1\).
\end{enumerate}

\sref{alg:relBcast} gives an algorithm for a node \(v\) to reliably broadcast a message \((m,s)\) sent by a node \(s\) in the first round. Note that in \sref{ln:nv}, the value \(n_v\) is \emph{not} the number of messages received in the round \(r\) but the number of nodes that sent at least one message to \(v\) until the current round \(r\). Also, the algorithm does not terminate as the idea is to use this mechanism as a subroutine in another algorithm that implements its own termination mechanism, as we will see for consensus, where few additional messages per round are used to detect termination. In the following lemmas, we show that the algorithm satisfies the three properties of the reliable broadcast. We will again assume that \(n > 3f\).

\newcommand{\echo}[1]{\mathit{echo(#1)}}
\begin{algorithm}[!htb] 
\caption{Reliable broadcast algorithm for a node \(v\) to broadcast a message \((m,s)\) sent by a node \(s\) in the first round. Each iteration of the loop is a single round.}\label{alg:relBcast}
\begin{algorithmic}[1]
\Statex
\If{v = s}\Comment{Round 1}
    \State Broadcast \((m,s)\).  \label{ln:init}
\Else
    \State Broadcast \(\mathit{present}\). \label{ln:present}
\EndIf
\If{Received \((m,s)\) from \(s\)}\Comment{Round 2}
\State Broadcast \(\echo{m,s}\). \label{ln:echoInit}
\EndIf
\For{\(r \leftarrow 1\)  to \(\infty\)}\Comment{Rounds 3 to \(\infty\)}
\State  \parbox[t]{\dimexpr\linewidth-\dimexpr\algorithmicindent}{%
  Let \(n_v\) be the number of nodes that sent at least one message to \(v\) until the round \(r\).\strut}\label{ln:nv}
\If{Received at least \(n_v/3\) \(\echo{m,s}\) messages and \\
\hspace{\algorithmicindent}not accepted \((m,s)\) already}
\State Broadcast \(\echo{m,s}\). \label{ln:echoBroad}
\EndIf
\If{Received at least \(2n_v/3\) \(\echo{m,s}\) messages and \\
\hspace{\algorithmicindent}not accepted \((m,s)\) already}
\State Accept \((m,s)\). \label{ln:accept}
\EndIf
\EndFor
\end{algorithmic}
\end{algorithm}

\begin{lem}\label{lem:rb-correct} 
If \(n > 3f\), then \sref{alg:relBcast} satisfies the correctness property of the reliable broadcast.
\end{lem}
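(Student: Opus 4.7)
The plan is to run the algorithm forward through round 3 and show directly that the acceptance threshold at \sref{ln:accept} is met at every correct node. The argument is a counting argument that uses only the hypothesis $n > 3f$ and the model assumption that Byzantine nodes cannot forge identifiers when communicating directly.

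First, I would analyze rounds 1 and 2 under the assumption that $s$ is correct. In round 1, $s$ executes \sref{ln:init} and broadcasts $(m,s)$ to all nodes, while every other correct node broadcasts $\mathit{present}$ at \sref{ln:present}. All these messages arrive in round 2, so every correct node receives $(m,s)$ directly from $s$ and, by \sref{ln:echoInit}, broadcasts $\echo{m,s}$ in round 2. Consequently, by round 3 each correct node $v$ has received $\echo{m,s}$ from at least $n - f$ distinct senders, namely every correct node.

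Next I would check the threshold at round 3, which is the first iteration of the for-loop. The value $n_v$ computed at \sref{ln:nv} is the number of distinct identifiers from which $v$ has received at least one message so far. Since a Byzantine node cannot directly impersonate another node, this set of identifiers is a subset of the $n$ real participants, hence $n_v \leq n$. Combined with $n > 3f$, this yields $3(n - f) > 2n \geq 2 n_v$, so the at least $n - f$ echoes from correct nodes comfortably exceed the $2 n_v / 3$ threshold of \sref{ln:accept}. Because round 3 is the earliest round in which any $\echo{m,s}$ message can arrive at $v$, the node has not already accepted, so the guard triggers and $v$ accepts $(m,s)$ in round 3, as required.

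There is no real obstacle here; the only subtlety is justifying $n_v \leq n$, which relies on the unforgeability of identifiers in the id-only model. The rest is straightforward arithmetic on the echo count and the $n > 3f$ threshold.
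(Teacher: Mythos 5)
Your proof is correct and follows essentially the same route as the paper's: trace rounds 1--3, count the at least $n-f$ echoes from correct nodes, bound $n_v \leq n$, and verify $n - f > 2n/3 \geq 2n_v/3$ from $n > 3f$. The only cosmetic difference is that the paper phrases the count in terms of the number $g$ of good nodes rather than $n-f$, and you add the (harmless, slightly more careful) remark that round 3 is the earliest an echo can arrive so the ``not accepted already'' guard is satisfied.
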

\begin{proof}
If the node \(s\) is correct, it sends the message \((m,s)\) to all the nodes during the initial broadcast (\sref{ln:init}).
Every good node receives the message and broadcasts \(\echo{m,s}\) in the next round (\sref{ln:echoInit}).
Let \(g\) be the number of good nodes.
Then, in the third round, every correct node receives \(g\) \(\echo{m,s}\) messages.
Moreover, the value of \(n_v \leq n\) in the third round as \(n\) is the maximum number of nodes that can send a message to \(v\).
As \(n> 3f\), we have \(g > 2f\) or \(3g > 2(f + g) = 2n\). Thus, we have \(g > 2n/3 \geq 2n_v/3\).
Therefore, every correct node accepts the message in the third round (\sref{ln:accept}).
\end{proof}

\begin{lem}\label{lem:rn-g1}
If \(n > 3f\) and a correct node \(v\) receives at least \(n_v/3\) copies of a message \(m\) from distinct nodes in a round \(r\), then at least one of those messages was sent by a correct node.
\end{lem}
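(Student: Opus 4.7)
The plan is to argue by contradiction and to leverage the fact that \sref{alg:relBcast} starts with every correct node broadcasting something --- either $(m,s)$ via \sref{ln:init} or \emph{present} via \sref{ln:present} --- in its very first round. This ensures that $v$ has heard from every correct node before the echo-counting loop even begins, which is what pins down the value of $n_v$ tightly enough for the argument to work.

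Concretely I would let $g$ be the total number of correct nodes and split $n_v = g_v + b_v$, where $g_v$ (resp.\ $b_v$) denotes the number of distinct correct (resp.\ Byzantine) senders $v$ has received at least one message from up to round $r$. Because a Byzantine node cannot forge its identifier when communicating directly, $b_v \leq f$. Because every correct node broadcasts in round $1$ and those messages arrive at $v$ by round $2$, we have $g_v = g$ in every round $r \geq 2$, and in particular in every round where the premise of the lemma is non-vacuous. Thus $n_v = g + b_v$.

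Now the contradiction: suppose every one of the $\geq n_v/3$ copies of $m$ that $v$ received originated from a Byzantine sender. Then $b_v \geq n_v/3 = (g + b_v)/3$, which rearranges to $2b_v \geq g$. Since $n > 3f$ gives $g \geq n - f > 2f$, we conclude $b_v > f$, contradicting $b_v \leq f$. Hence at least one of those copies must have come from a correct node.

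The main obstacle, I expect, is being careful about what $n_v$ actually counts: it is the running total of distinct identifiers $v$ has \emph{ever} heard from, not the set of senders in the current round. It is exactly this accumulating count, together with the round-$1$ flood of \emph{present}/initial broadcasts, that promotes the trivial bound $b_v \leq f$ into a genuine contradiction at the optimal $n > 3f$ threshold, rather than only at the easier $n > 4f$ one would get from $n_v \leq n$ alone.
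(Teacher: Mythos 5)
Your proof is correct and follows essentially the same route as the paper's: both rest on the round-1 broadcast guaranteeing $n_v = g + (\text{number of faulty senders seen})$ with that faulty count at most $f$, combined with $g > 2f$. The only cosmetic difference is that you argue by contradiction ($b_v > f$) while the paper directly lower-bounds the number of correct senders of $m$ by a positive quantity.
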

\begin{proof}
Let \(f_v''\) be the number of faulty nodes that sent \(m\) to \(v\) in the round \(r\).
Since every correct node transmits a message in the first round (Lines~\ref{ln:init} and \ref{ln:present}), we have \(n_v\geq g\), where \(g\) is the number of good nodes.
So, we can write \(n_v = g + f_v'\), where \(f_v'\) is the number of faulty nodes that sent at least one message to \(v\) until the round \(r\).
Using \(f_v'' \leq f_v'\) and \(n_v = g+ f_v'\), the number of correct nodes \(G\) that sent a message to \(v\) in the round \(r\) are  at least \(n_v/3 - f_v'' \geq (g - 2f_v')/3\). As \(g > 2f\), we have \(G > 2(f - f_v')/3\) or at least one as \(f \geq f_v'\).
So, at least one correct node sent the message \(m\) to \(v\) in the round \(r\).
\end{proof}

\begin{lem}\label{lem:rb-unforge}
If \(n > 3f\), then \sref{alg:relBcast} satisfies the unforgeability property of the reliable broadcast.
\end{lem}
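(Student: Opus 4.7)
The plan is to prove the contrapositive: assuming $s$ is correct and did not broadcast $(m,s)$ in round~1, show that no correct node ever accepts $(m,s)$. The strategy is to inductively rule out any correct node ever broadcasting $\echo{m,s}$, and then invoke \sref{lem:rn-g1} to conclude that the acceptance threshold can never be reached.

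First I would handle the base case. Because $s$ is correct and did not execute \sref{ln:init} with $(m,s)$, no correct node receives $(m,s)$ from $s$ in round~2. Consequently the condition on \sref{ln:echoInit} fails at every correct node, so no correct node broadcasts $\echo{m,s}$ in round~2.

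Next comes the induction on rounds $r \geq 3$, claiming that no correct node broadcasts $\echo{m,s}$ in round~$r$. Suppose for contradiction that $r$ is the first such round and let $v$ be a correct node that broadcasts $\echo{m,s}$ in round~$r$ at \sref{ln:echoBroad}. Then $v$ received at least $n_v/3$ distinct $\echo{m,s}$ messages by round~$r$, all of which were transmitted in rounds~$2,\dots,r-1$. By \sref{lem:rn-g1} (applied to those $\echo{m,s}$ messages treated as a single logical message type $m$), at least one of them was sent by a correct node in some round $r' \leq r-1$. This contradicts the inductive hypothesis that no correct node broadcast $\echo{m,s}$ in rounds $2,\dots,r-1$. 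Therefore no correct node ever broadcasts $\echo{m,s}$.

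Finally, for acceptance at \sref{ln:accept}, a correct node $v$ would need at least $2n_v/3 \geq n_v/3$ distinct $\echo{m,s}$ messages, and again by \sref{lem:rn-g1} this would require at least one correct sender of $\echo{m,s}$, which we have just ruled out. Hence no correct node accepts $(m,s)$, proving unforgeability. The only subtle step is the inductive argument and the precise bookkeeping of \emph{when} the counted echoes were sent versus received; I would make sure the appeal to \sref{lem:rn-g1} uses distinct senders accumulated over rounds $\leq r-1$, which matches the algorithm's use of $n_v$ as a running count rather than a per-round count.
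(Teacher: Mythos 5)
Your proposal is correct and is essentially the paper's own argument in contrapositive form: both hinge on \sref{lem:rn-g1} plus a minimality argument over the first round in which a correct node sends $\echo{m,s}$, concluding that this can only happen via a direct receipt from $s$ on \sref{ln:echoInit}. The paper traces backwards from an acceptance to that first correct echoer, while you induct forwards to rule it out; the logical content and the reliance on \sref{lem:rn-g1} are the same.
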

\begin{proof}
We need to show that if a correct node accepts a message \((m,s)\) and \(s\) is a correct node, then the message \((m,s)\) was broadcast by \(s\).
If a message \((m,s)\) was accepted by a correct node \(v\) in a round \(r_a\), then \(v\) received at least \(2n_v/3\) \(\echo{m,s}\) messages in the round \(r_a\).
Thus, the number of correct nodes from which \(v\) received the \(\echo{m,s}\) messages in round \(r_a\) are at least \(2n_v/3 - f_v'' \geq n_v/3 - f_v''\), where \(f_v''\) is the number of messages received by \(v\) from the faulty nodes in the round \(r_a\).
Using \sref{lem:rn-g1}, at least one of the \(\echo{m,s}\) messages received by \(v\) in the round \(r_a\) was sent by a correct node.

Let \(r_f\) be the first round when an \(\echo{m,s}\) message was sent by a correct node \(u\). 
Thus, in the round \(r_f\), the node \(u\) either received at least \(n_{u}/3\) \(\echo{m,s}\) messages or received the message \((m,s)\) from \(s\) (Lines~\ref{ln:echoBroad} or \ref{ln:echoInit}). 
If \(u\) received at least \(n_u/3\) \(\echo{m,s}\) messages, then using \sref{lem:rn-g1}, there is at least one correct node that sent an \(\echo{m,s}\) message in the previous round.
Since \(r_f\) is the first round when a correct node sends an \(\echo{m,s}\) message, the node \(u\) must have received the message \((m,s)\) from \(s\) in the round \(r_f\).
Thus, node \(s\) indeed sent the message \((m,s)\).
As \(s\) is correct, the message \((m,s)\) was broadcast to all the nodes in the first round.
\end{proof}

\begin{lem}\label{lem:rn-g2}
If \(n>3f\) and a correct node \(v\) receives at least \(2n_v/3\) copies of a message \(m\) in a round \(r\), then every correct node \(u\) receives at least \(n_u/3\) copies of \(m\) in the round \(r\).
\end{lem}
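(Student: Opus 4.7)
The plan is to count the copies of $m$ reaching $u$ via correct intermediaries, and then lower-bound that count by $n_u/3$ using $n>3f$. The main leverage is that correct nodes broadcast (send to \emph{all} nodes), so every correct sender to $v$ is also a sender to $u$ in the same round.

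First I would set up the notation mirroring the proof of \sref{lem:rn-g1}: let $g$ be the number of correct nodes, $f_v'$ the number of faulty nodes that sent at least one message to $v$ up to round $r$, and $f_v''$ the number of faulty nodes that sent $m$ to $v$ in round $r$, so $f_v''\le f_v'\le f$ and $n_v=g+f_v'$. From the hypothesis, $v$ receives $m$ from at least $2n_v/3$ distinct nodes, hence from at least $2n_v/3-f_v''$ correct nodes. Substituting $n_v=g+f_v'$ and using $f_v''\le f_v'$ gives a lower bound of $(2g-f_v')/3\ge (2g-f)/3$ correct senders of $m$ to $v$ in round $r$.

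Next I would invoke the broadcast semantics of the algorithm: whenever a correct node transmits $\echo{m,s}$ or $(m,s)$, it does so by broadcasting, so the same message is delivered to $u$ in round $r$ as well. Consequently $u$ receives at least $(2g-f)/3$ copies of $m$ from distinct (correct) nodes in round $r$. It remains to show $(2g-f)/3\ge n_u/3$, i.e., $2g-f\ge n_u$. Since $n_u\le n$ and $n\le g+f$ (there are at most $f$ faulty nodes in total), it suffices that $2g-f\ge g+f$, i.e., $g\ge 2f$; and indeed $g\ge n-f\ge 2f+1>2f$ by $n>3f$.

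I do not anticipate a significant obstacle beyond bookkeeping. The only subtle point is the asymmetry between $n_v$ and $n_u$: $v$ and $u$ may know disjoint subsets of faulty nodes, so one cannot directly transfer the fraction $2/3$ from $v$'s view to $u$'s view. The resolution is to move from fractions of local counts to an absolute count of correct contributors (which is identical in both views because correct nodes broadcast), and then re-divide by $n_u$ using the global bound $n_u\le g+f$ together with $g>2f$.
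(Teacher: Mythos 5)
Your proof is correct and follows essentially the same route as the paper's: lower-bound the number of \emph{correct} senders of $m$ to $v$ by $2n_v/3-f_v''\ge(2g-f)/3$, note that correct senders broadcast and hence reach $u$ too, and then compare $(2g-f)/3$ against $n_u/3$ via $n_u\le g+f$ and $g>2f$. The paper's only cosmetic difference is that it passes through the intermediate bound $(2g-f)/3>(g+f)/3\ge(g+f_u)/3=n_u/3$ rather than comparing to $n_u$ directly, which is the same computation.
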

\begin{proof}
As \(v\) receives at least \(2n_v/3\) messages, at least \(2n_v/3 - f_v''\) of them were sent by the correct nodes, where \(f_v''\) is the number of messages received by \(v\) from the faulty nodes in the round \(r\).
Let \(f_v'\) be the number of faulty nodes from which \(v\) received at least one message until the round \(r\).
Then, we have \(2n_v/3 - f_v'' = 2(g+f_v')/3 - f_v''\), where \(g\) is the number of good nodes.
As \(f_v'' \leq f_v'\) and \(f_v' \leq f\) by definition, we have  \(2(g+f_v')/3 - f_v'' \geq (2g- f)/3\).

Using \(n > 3f\) or \(g > 2f\), we have \((2g-f)/3 = (g + (g-f))/3 >  (g+ f)/3\). 
Thus, at least \((g+ f)/3\) correct nodes broadcast the  message \(m\) and every correct node receives at least \((g+ f)/3\) copies of \(m\) in the round \(r\).
For a correct node \(u\), we have \((g+ f)/3 \geq (g+f_u)/3 = n_u/3\), where \(f_u\) is the number of faulty nodes from which \(u\) has received at least one message until the round \(r\).
\end{proof}

\begin{lem}\label{lem:rb-relay}
If \(n > 3f\), then \sref{alg:relBcast} satisfies the relay property of the reliable broadcast.
\end{lem}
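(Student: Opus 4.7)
My plan is to mirror the structure of the correctness proof (\sref{lem:rb-correct}), but to transport the ``acceptance-triggering'' condition at $v$ to every other correct node via \sref{lem:rn-g2}, and then to close the loop in one more round.

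First I would fix a correct node $v$ that accepts $(m,s)$ in some round $r_a$. By the accept test on \sref{ln:accept}, $v$ has received at least $2n_v/3$ distinct $\echo{m,s}$ messages in round $r_a$. Invoking \sref{lem:rn-g2} directly, every correct node $u$ must also have received at least $n_u/3$ distinct $\echo{m,s}$ messages in round $r_a$. This is the key step.

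Next I would argue that by the end of round $r_a$ every correct node has broadcast $\echo{m,s}$ at least once. For a correct node $u$ that has \emph{not} yet accepted $(m,s)$ before round $r_a$, the condition on \sref{ln:echoBroad} is satisfied in round $r_a$, so $u$ broadcasts $\echo{m,s}$ in round $r_a$. For a correct node $u$ that accepted in some earlier round $r' \le r_a - 1$, the accept test on \sref{ln:accept} required $\geq 2n_u/3 \geq n_u/3$ echoes in round $r'$, so the echo test on \sref{ln:echoBroad} was also satisfied in round $r'$ (the ``not accepted already'' guard was still true at that instant, since acceptance happens on \sref{ln:accept}); hence $u$ broadcast $\echo{m,s}$ in round $r'$. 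Either way, all $g$ correct nodes have broadcast $\echo{m,s}$ by round $r_a$.

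Finally I would use the all-to-all delivery guarantee for correct broadcasts: in round $r_a+1$, every correct node $u$ has received $\echo{m,s}$ from all $g$ correct nodes. Since $n>3f$ gives $g>2f$ and therefore $3g > 2(g+f) \geq 2n \geq 2n_u$, we have $g > 2n_u/3$, so the accept test on \sref{ln:accept} is satisfied and $u$ accepts $(m,s)$ in round $r_a+1$ (if it had not done so earlier). Thus every correct node has accepted $(m,s)$ by round $r_a + 1$.

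The only subtle step is the second one, establishing that correct nodes that accept in the same round they first qualify to echo did in fact emit an $\echo{m,s}$ message before (or at) acceptance; I would be careful to read \sref{ln:echoBroad} and \sref{ln:accept} in order, using that the guard ``not accepted already'' is evaluated before the accept action updates the node's state. Aside from this bookkeeping, the rest is a direct chaining of \sref{lem:rn-g2} and the simple arithmetic $g > 2n_u/3$.
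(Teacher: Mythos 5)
Your overall strategy is the paper's: propagate the acceptance threshold at $v$ to an $n_u/3$ threshold at every correct $u$ via \sref{lem:rn-g2}, have every correct node echo, and then count $g > 2n_u/3$ echoes one round later. The difference is that the paper lets $r$ be the \emph{first} round in which any correct node accepts, whereas you take an arbitrary acceptance round $r_a$, and that choice creates a real problem in your last step. A correct node $w$ that accepted in some round $r' < r_a$ does \emph{not} broadcast $\echo{m,s}$ in round $r_a$ (the ``not accepted already'' guard on \sref{ln:echoBroad} blocks it), so its echo arrives at other nodes in round $r'+1 \le r_a$, not in round $r_a+1$. Your conclusion that in round $r_a+1$ every correct node ``has received'' echoes from all $g$ correct nodes is therefore a cumulative count across rounds, while the acceptance test --- as the paper uses it throughout, and as \sref{lem:rn-g2}, which you yourself invoke, is phrased --- counts messages received \emph{in a single round}. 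Under that per-round reading, a yet-to-accept node $u$ may see strictly fewer than $2n_u/3$ echoes in round $r_a+1$, and your final inequality does not go through.

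The repair is exactly the paper's move: prove the claim for the first acceptance round $r_1$. There no correct node has accepted earlier, so all $g$ correct nodes echo in round $r_1$ itself (your ordering observation about \sref{ln:echoBroad} preceding \sref{ln:accept} is correct and needed here), all $g$ echoes land in round $r_1+1$, and $g > 2n_u/3$ forces acceptance. The general statement then follows for free: any later acceptance round $r_a > r_1$ satisfies $r_1 + 1 \le r_a < r_a+1$, so every correct node has already accepted by then. Your careful handling of the ``accept and echo in the same round'' bookkeeping is fine; the gap is solely in trying to run the delivery argument from an arbitrary $r_a$.
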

\begin{proof}
Let \(r\) be the first round in which a correct node \(v\) accepts the message \((m,s)\).
Then, we show that every correct node accepts the message \((m,s)\) by the round \(r + 1\).

As \(v\) accepts the message \((m,s)\) in round \(r\), it received at least \(2n_v/3\) \(\echo{m,s}\) messages.
Using \sref{lem:rn-g2}, each correct node \(u\) receives at least \(n_u/3\) \(\echo{m,s}\) messages in the round \(r\). 
So, every correct node broadcasts \(\echo{m,s}\) message in the round \(r\) (\sref{ln:echoBroad}) and each one of them receives \(g\) \(\echo{m,s}\) messages in the round \(r+1\).
As \(g > 2f\), we have \(3g > 2(f + g) = 2n\). Thus, we have \(g > 2n/3 \geq 2n_u/3\) for every correct node \(u\).
Therefore, every correct node accepts the message \((m,s)\) in the round \(r+1\).
\end{proof}

Using \sref{lem:rb-correct}, \sref{lem:rb-unforge} and \sref{lem:rb-relay}, all the properties of the reliable broadcast are satisfied and we have the following theorem.

\begin{thm}
If \(n > 3f\), then \sref{alg:relBcast} satisfies the properties of the reliable broadcast in the \emph{id-only} model.
\end{thm}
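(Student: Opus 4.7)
The plan is to assemble the theorem directly from the three lemmas already established in the preceding development. Since reliable broadcast is defined as the conjunction of three properties listed at the start of \sref{sec:relBcast} — correctness, unforgeability, and relay — and each has been verified for \sref{alg:relBcast} under the hypothesis \(n > 3f\), the proof reduces to invoking each lemma in turn and concluding.

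Concretely, I would first cite \sref{lem:rb-correct} to obtain correctness: whenever the designated sender \(s\) is correct, every correct node accepts \((m,s)\). Next, \sref{lem:rb-unforge} supplies unforgeability: if a correct node accepts \((m,s)\) and \(s\) is correct, then \(s\) indeed broadcast \((m,s)\) in the first round. Finally, \sref{lem:rb-relay} gives relay: if any correct node accepts \((m,s)\) in some round \(r\), every correct node accepts it by round \(r+1\). Since these are precisely the three defining conditions for a reliably broadcast message, the theorem follows.

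I do not anticipate any obstacle in this step. The substantive technical content — in particular the two counting lemmas \sref{lem:rn-g1} and \sref{lem:rn-g2}, which convert the local threshold \(n_v/3\) (respectively \(2n_v/3\)) at each correct node into statements about the global number of correct senders — has already been carried out inside the three property lemmas. The theorem itself is a one-line assembly, useful chiefly as a named handle for subsequent sections (consensus, rotor-coordinator, approximate agreement) that will use reliable broadcast as a black-box subroutine in the id-only model.
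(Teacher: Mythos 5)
Your proposal matches the paper exactly: the theorem is stated immediately after the remark ``Using \sref{lem:rb-correct}, \sref{lem:rb-unforge} and \sref{lem:rb-relay}, all the properties of the reliable broadcast are satisfied,'' which is precisely the one-line assembly of the three property lemmas you describe. No gaps.
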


%So, in the reliable a message is accepted again after being broadcast?
% Which property out of relay, unforgeability or correctness is being violated 

\section{Rotor-Coordinator}\label{sec:rotor}
The purpose of Rotor-Coordinator is to have a common coordinator node in each round, where the coordinator node is trusted by everyone in that round. After \(f+1\) different coordinators are selected, everyone is sure that at least one of those \(f+1\) selected coordinators was correct, since there are at most \(f\) faulty nodes. \sref{alg:rotor} gives the algorithm for selecting a set of \(f + 1\) different coordinators, each one in a separate round.

\begin{algorithm}[!htb]
\algnewcommand{\LeftComment}[1]{\Statex\hspace{\algorithmicindent}\(\triangleright\) #1}
\caption{Rotor-Coordinator algorithm for a node \(v\). The sets \(C_v\) and \(S_v\) are used by \(v\) to store process identifiers. The set \(C_v\) is ordered by the process identifiers in increasing order. We use \(|C_v|\) for the size of \(C_v\) and \(C_v[i]\) for its \(i^{\mt{th}}\) member, where \(i \geq 0\). The set \(B_v\) holds messages before they are broadcast by \(v\) at the end of a round. Note that each iteration of the loop is a single round.}
\label{alg:rotor}
\begin{algorithmic}[1]
\State \(C_v \leftarrow \phi\)\Comment{Set of \emph{candidate} coordinators} \label{ln:rc-cdef}
\State \(S_v \leftarrow \phi\)\Comment{Set of \emph{selected} coordinators} \label{ln:rc-sdef}
\State Broadcast \(\mt{init}\).\Comment{Round 1}\label{ln:rc-init}
\State Broadcast \(\mt{echo}(p)\) if received \(\mt{init}\) from \(p\). \Comment{Round 2}
\For{\(r \leftarrow 0 \to \infty\)} \Comment{Rounds 3 up to termination} 
\State \(B_v \leftarrow \phi\)\Comment{\(B_v\) is broadcast at the round's end}
\State \parbox[t]{\dimexpr\linewidth-\dimexpr\algorithmicindent}{%
Let \(n_v\) be the number of nodes that sent at least one message to \(v\) until the round \(r\).\strut}
\If{Received at least \(n_v/3\) \(\mt{echo}(p)\) and \(p \notin C_v\) \\\hspace{\algorithmicindent}}
\State \(B_v \leftarrow B_v \cup \{\mt{echo}(p)\}\) \label{ln:rc-rb1}
\EndIf
\If{Received at least \(2n_v/3\) \(\mt{echo}(p)\) and \(p \notin C_v\) \\\hspace{\algorithmicindent}}
\State \(C_v \leftarrow C_v \cup \{p\}\) \label{ln:rc-rb2}
\EndIf

\State \(p \leftarrow C_v[r \bmod |C_v|]\) \Comment{\(p\) is the next coordinator} \label{ln:rc-nextc}%\Comment{\parbox[t]{0.4\linewidth}{Select the next coordinator as \(p\).\strut}}
 \State \parbox[t]{\dimexpr\linewidth-\dimexpr\algorithmicindent}{%
Let \(p'\) be the coordinator selected in the previous round.\strut
}
\If{Received \(\mt{opinion}(x)\) from \(p'\)}
\State Accept \(x\) as the coordinator's opinion. \label{ln:rc-opnac}
\EndIf
\If{\(p \in S_v\)}
\State \textbf{break} \label{ln:rc-brk}
\EndIf
\State \(S_v \leftarrow  S_v \cup \{p\}\) \label{ln:rc-addtos}
\If{\(v = p\) }\Comment{if \(v\) itself is the coordinator}
\State Let \(o_v\) be \(v\)'s current opinion.
%\Statex\hspace{\algorithmicindent}\hspace{\algorithmicindent}\(\triangleright\) \parbox[t]{\dimexpr\linewidth - 2\dimexpr\algorithmicindent}{Broadcast \(v\)'s opinion at the end of the round.\strut}
%\Statex \Comment{Broadcast \(v\)'s opinion at the end of the round.}
\State \(B_v \leftarrow B_v \cup \{\mt{opinion}(o_v)\}\) \label{ln:rc-opnbr}
\EndIf
\State Broadcast \(B_v\) if its non-empty.
\EndFor
\end{algorithmic}
\end{algorithm}

The idea is that every correct node broadcasts its willingness to become a coordinator initially, when the faulty nodes may or may not participate (\sref{ln:rc-init}). Every correct node \(v\) keeps a set of candidate coordinators \(C_v\), which it updates in a reliable broadcast fashion (Lines \ref{ln:rc-rb1} and \ref{ln:rc-rb2}). In each round, a correct node \(v\) selects the coordinator with the next larger identifier, say \(p\), from the set \(C_v\) and adds it to the set of selected coordinators \(S_v\) (Lines~\ref{ln:rc-nextc} and \ref{ln:rc-addtos}). The node \(v\) accepts the opinion from \(p\) in the next round as the coordinator's opinion (\sref{ln:rc-opnac}) and broadcasts its own opinion as the coordinator's opinion in case \(v\) was selected as the coordinator from the set \(C_v\) (\sref{ln:rc-opnbr}). The node \(v\) terminates when it reselects the same node as the coordinator (\sref{ln:rc-brk}). The hope is that by the time a correct node terminates, it has already witnessed a round in which every correct node accepts the opinion of a common and a correct coordinator. We start by observing that if a correct node adds \(p\) to its set of candidate coordinator \(C_v\), then another correct node \(u\) adds \(p\) to its set \(C_u\) as well by the next round.

\begin{lem}\label{lem:rc-relay}
If a correct node \(v\) adds \(p\) to the set \(C_v\) in a round \(r\), then any correct node \(u\neq v\) adds \(p\) to the set \(C_u\) by the round \(r+1\).
\end{lem}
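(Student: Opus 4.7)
The plan is to mimic the proof of \sref{lem:rb-relay} (the relay property for reliable broadcast), since the update rule for the candidate set \(C_v\) in \sref{alg:rotor} (lines~\ref{ln:rc-rb1}--\ref{ln:rc-rb2}) mirrors the echo/accept structure of \sref{alg:relBcast}. So the first step is to observe that adding \(p\) to \(C_v\) requires having received at least \(2 n_v / 3\) copies of \(\mt{echo}(p)\) in round \(r\), which is exactly the hypothesis of \sref{lem:rn-g2} applied to the message \(m = \mt{echo}(p)\).

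To avoid the annoyance that a correct node might have already added \(p\) to its \(C\) set in some earlier round (in which case it no longer re-broadcasts \(\mt{echo}(p)\), because of the \(p \notin C_v\) guard), I would not argue directly about round \(r\) but instead introduce \(r^*\), the \emph{earliest} round in which any correct node adds \(p\) to its \(C\)-set. By hypothesis \(r^* \le r\). Pick the correct node \(v^*\) that adds \(p\) in round \(r^*\); then \(v^*\) received at least \(2 n_{v^*}/3\) copies of \(\mt{echo}(p)\) in round \(r^*\), so by \sref{lem:rn-g2} every correct node \(u\) received at least \(n_u/3\) such messages in round \(r^*\). By minimality of \(r^*\), no correct node has \(p\) in its \(C\)-set at the start of round \(r^*\), so every correct node triggers the guard at \sref{ln:rc-rb1} and broadcasts \(\mt{echo}(p)\) at the end of round \(r^*\).

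In round \(r^* + 1\), each correct node therefore receives \(\mt{echo}(p)\) from all \(g\) correct nodes. Since \(g > 2f\) gives \(g > 2n/3 \ge 2 n_u / 3\) for every correct \(u\) (identically to the argument in \sref{lem:rb-correct} and \sref{lem:rb-relay}), any correct node that still has \(p \notin C_u\) adds \(p\) to \(C_u\) in round \(r^* + 1\). Hence by the end of round \(r^* + 1 \le r + 1\) every correct node has \(p \in C_u\), as desired.

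The only real obstacle I anticipate is bookkeeping around the \(p \notin C_v\) guard: without it one could naively claim that every correct node re-broadcasts \(\mt{echo}(p)\) in round \(r\) itself, which is false. Introducing \(r^*\) bypasses this cleanly, because at the start of \(r^*\) the guard is vacuously satisfied for every correct node. Everything else is an almost verbatim port of \sref{lem:rb-relay}, using \sref{lem:rn-g2} in place of its reliable-broadcast analogue.
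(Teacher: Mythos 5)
Your proof is correct and is essentially the paper's argument: the paper's own proof of this lemma just observes that the \(C_v\)-update rule (with \(B_v\) emptied and rebroadcast each round) is exactly the accept rule of \sref{alg:relBcast} and invokes \sref{lem:rb-relay} as a black box, whereas you inline that very proof --- the same appeal to \sref{lem:rn-g2}, the same \(g > 2n_u/3\) count, and the same ``first round any correct node accepts/adds'' device (your \(r^*\)) that \sref{lem:rb-relay} itself uses to get past the ``not already accepted'' guard. There is no gap; the only difference is reduction versus re-derivation.
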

\begin{proof}
The set \(B_v\) is emptied at the beginning of every round \(r\) and is broadcast at the end of the round \(r\). Thus, the algorithm for adding a process identifier \(p\) to \(C_v\) is same as that of accepting a message \((m,s)\) in \sref{alg:relBcast} if \((m,s) = p\). So, the lemma follows using \sref{lem:rb-relay} for the relay property of the reliable broadcast.
\end{proof}

We call a round a \emph{good round} if the same node \(p\) was selected as a coordinator by every correct node and the node \(p\) is correct. In the following, we show that every correct node witnesses a good round before it terminates, if \(n > 3f\). We will call a round as a \emph{silent} round if the set \(C_v\) remains unchanged for every correct node \(v\), i.e, no correct node executes \sref{ln:rc-rb2} in that round. A \emph{non-silent} round is a round that is not silent. We observe that in a silent round, the value of \(C_v\) is identical for every correct node \(v\). If they were not, then there is a silent round between a correct node \(v\) adding an identifier \(p\) to its set \(C_v\) and another correct node \(u\neq v\) adding \(p\) to its set \(C_u\). This contradicts \(\sref{lem:rc-relay}\). The assumption \(n > 3f\) is used for reliable broadcast and also to ensure a good round. With \(n>4f\), a good round is easily ensured, but \(n>3f\) suffices with careful observation as follows.

\begin{lem}\label{lem:rc-gdrnd}
If \(n > 3f\), then every correct node witnesses at least one good round until it terminates.
\end{lem}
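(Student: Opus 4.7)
My plan is to reduce the statement to a rotation-counting argument on the eventually stable candidate set and then invoke the relay property already proved for \sref{alg:rotor}. First, I would observe that each \(C_v\) is monotone non-decreasing, bounded by \(n\), and therefore stabilizes to some \(\hat{C}_v\); by \sref{lem:rc-relay} combined with the silent-round observation recalled immediately before the lemma, all \(\hat{C}_v\) coincide with a common \(\hat{C}\). Rerunning the echo-count from the proof of \sref{lem:rb-correct}---every correct node broadcasts \(\mt{init}\) in round~1, every correct node echoes in round~2, and \(g > 2n_v/3\) under \(n>3f\)---every correct identifier is added to every \(C_v\) already at the first loop iteration, so \(|\hat{C}| \ge g > 2f\) while at most \(f\) of \(\hat{C}\)'s members are Byzantine.

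Next, I would exploit that in any silent round all correct nodes hold the same \(C_v\) and therefore, by line~\sref{ln:rc-nextc}, select a common coordinator; a silent round whose common coordinator is correct is automatically a good round. So the task reduces to exhibiting such a silent round in \([0,T_v]\) for every correct \(v\) (where \(T_v\) is \(v\)'s termination round). On the termination side, \(|S_v|\) grows by exactly one per round and \(S_v \subseteq \hat{C}\), giving \(T_v \le |\hat{C}|\). On the non-silent side, every addition to some correct \(C_u\) is relayed within one further round by \sref{lem:rc-relay}, so the at most \(|\hat{C}|\) total additions contribute at most \(2|\hat{C}|\) non-silent rounds, leaving silent rounds to form the bulk of the interval.

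The remaining step is to show that the line~\sref{ln:rc-nextc} selections across the silent rounds in \([0,T_v]\) cover at least \(f+1\) cyclically consecutive positions of \(\hat{C}\); since at most \(f\) positions are Byzantine, one such selection must be correct and yields the required good round. I expect the main obstacle to be the pre-stabilization phase: inserting a Byzantine identifier at a low position of the sorted \(C_v\) shifts the cyclic index \(r\bmod|C_v|\) and could in principle make the next selection coincide with an element already in \(S_v\), terminating \(v\) before the cycle has passed through any correct coordinator. Handling this requires a careful case analysis on where the newly added identifier lands relative to \(S_v\), using the at-most-one-round relay delay of \sref{lem:rc-relay} to argue that any such insertion is essentially synchronised across correct nodes and is accompanied by a silent round whose common selection is correct---precisely the "careful observation" flagged in the paragraph preceding the lemma.
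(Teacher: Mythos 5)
Your setup is sound---monotone stabilization of the candidate sets, the observation that a silent round yields a common coordinator, and the reduction to ``some silent round before termination selects a correct node''---but the proof as written does not close. The step you defer (``handling this requires a careful case analysis \ldots'') is not a technicality; it is the entire content of the lemma, and your sketch of how it would go is not what actually works. In particular, the assertion that any pre-stabilization insertion ``is accompanied by a silent round whose common selection is correct'' is unjustified (an insertion round is by definition non-silent, and nothing forces a correct selection near it), and your quantitative bounds are too weak to carry the coverage argument: you bound the termination time by \(T_v \le |\hat{C}|\) but bound the non-silent rounds only by \(2|\hat{C}|\), which exceeds \(T_v\) and therefore leaves nothing for the silent rounds. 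You also never establish that the silent-round selections occupy \(f+1\) \emph{cyclically consecutive} positions of \(\hat{C}\); consecutive rounds index into sets of changing size, so consecutive selections need not be adjacent positions.

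The paper closes the gap with a different and sharper counting argument that you should compare against. Assuming for contradiction that \(v\) terminates at round \(r_t\) without a good round, it shows the invariant \(|C_v| > r_c\) for every \(r_c \le r_t\): writing \(|C_v| = |F_v| + |G_v|\) with \(|G_v| = n-f > 2f\) (all correct identifiers enter before the first selection, by the correctness property), it bounds the number of non-silent rounds by \(n_{\mathit{ns}} \le 2f\) (only \emph{faulty} insertions cause non-silent rounds after round \(0\), and each is relayed within one round by \sref{lem:rc-relay}), and bounds \(|F_v| \ge n_s\) because under the contradiction hypothesis every silent round's common coordinator is a \emph{distinct faulty} node. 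Hence \(|C_v| > n_s + n_{\mathit{ns}} = r_c\), so \(r_c \bmod |C_v| = r_c\) never wraps around, \(v\) never reselects a coordinator, and \(v\) cannot terminate---contradiction. This directly neutralizes the obstacle you flagged (early termination caused by index shifts) rather than analyzing insertion positions case by case. To repair your write-up you would need to replace your \(2|\hat{C}|\) bound with the \(2f\) bound on non-silent rounds and supply the \(|F_v| \ge n_s\) count of distinct faulty coordinators in silent rounds, at which point you essentially recover the paper's proof.
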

\begin{proof}
Assume for contradiction that a node \(v\) terminates in the round with \(r = r_t\) without witnessing a good round. 
%As a node only terminates when it selects the same node as the coordinator, the node \(v\) accepts a node as the coordinator again without witnessing a good round.
%Consider the first round or the smallest value of \(r\), say \(r_t\), when the node \(v\) selects the same node as the coordinator without witnessing a good round.
Consider a round with \(r = r_c \leq r_t\).
Let \(F_v\subseteq C_v\) and \(G_v \subseteq C_v\), respectively, be the set of faulty node identifiers and the set of good or correct node identifiers in \(C_v\) when the coordinator node is selected in the round \(r_c\) (\sref{ln:rc-nextc}).
Thus, we have \(|C_v| = |F_v| + |G_v|\).

Using \sref{lem:rb-correct}, all the correct node identifiers are added to \(C_v\), even before the first coordinator node is selected.
So, we have \(|G_v| = n - f\) and \(|C_v| = |F_v| + n - f\).
Using \(n > 3f\), we get \(|C_v| > |F_v| + 2f\).
Say that there is \emph{no} correct node \(u\) that added a faulty identifier to its set \(C_u\) in the round with \(r = 0\).
Then, every correct node selects a common coordinator from the set \(G_v\) and \(v\) witnesses a good round before termination, a contradiction.
Thus, there is a correct node \(u\) that adds a faulty identifier to its set \(C_u\) in the round with \(r=0\).
For every non-silent round afterwards, at least one faulty node identifier is added to the set \(C_u\) of some correct node \(u\).
Using \sref{lem:rc-relay}, if a faulty node identifier \(p\) is added to \(C_u\), every correct node \(w \neq u\) adds \(p\) to \(C_w\) by the next round.
Thus, we have \(2f \geq n_\mt{ns}\), where \(n_\mt{ns}\) is the number of non-silent rounds prior to the round \(r_c\) and starting from the round \(r = 0\).
Therefore, we have \(|C_v| > |F_v| + n_\mt{ns}\).

Moreover, until the round \(r_c\), node \(v\) has neither witnessed a good round, nor it has selected the same node again as a coordinator by our assumption.
So, in all the silent rounds prior to the round \(r_c\), a unique faulty node was selected as a coordinator by \(v\).
Therefore, if \(\mt{n_s}\) is the number of silent rounds prior to the round \(r_c\), then \(|F_v| \geq n_s\) since \(v\) selects a node as a coordinator only after adding it to the set \(C_v\).
So, we have \(|C_v| > n_s + n_\mt{ns}\).

Since \(r\) starts from \(0\), we have \(n_s + n_\mt{ns} = r_c\).
So, we have \(|C_v| > r_c\) and \(r_c \bmod |C_v| = r_c\). 
Since the above inequality is true for every round \(r_c \leq r_t\), a node that was already selected as a coordinator, is in the set \(\{C_v[r \bmod |C_v|]: r < r_c\}\).
Therefore, for selecting the same identifier as a coordinator again, it must be that \(r > |C_v| > r_c\), a contradiction.
\end{proof}

\begin{thm}\label{thm:rc}
If \(n > 3f\), then every correct node terminates in \(O(n)\) rounds and there is a round in which every correct node accepts the opinion of a common and a correct coordinator node.
\end{thm}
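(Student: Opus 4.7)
The theorem makes two claims: termination within $O(n)$ rounds, and the existence of a round in which every correct node accepts the opinion of a common correct coordinator. I would prove them separately, starting with termination.

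For termination, the bound follows from two monotone invariants: $S_v \subseteq C_v$ throughout — since \sref{ln:rc-addtos} adds to $S_v$ only an identifier $p$ just selected from $C_v$ and $C_v$ is non-decreasing — and $|C_v| \le n$, which holds because the model guarantees that direct-message sender identifiers are authentic, so at most $n$ distinct identifiers can ever be added to $C_v$. Each non-terminating iteration of the main loop strictly enlarges $|S_v|$ by one. Hence after $n$ such iterations, $|S_v| = n \ge |C_v|$, forcing $S_v = C_v$; the next iteration's selected coordinator $C_v[r \bmod |C_v|]$ then lies in $S_v$ and the loop breaks at \sref{ln:rc-brk}. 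Including the two setup rounds preceding the main loop, every correct node terminates within $n + 2 = O(n)$ rounds.

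For the common-opinion round, my plan is to invoke \sref{lem:rc-gdrnd} and take $r_g$ to be the earliest good round. Minimality combined with \sref{lem:rc-gdrnd} implies that no correct node terminates before round $r_g$: an earlier termination would, by the lemma, be preceded by a witnessed good round strictly earlier than $r_g$, contradicting minimality. Thus every correct node — including the correct coordinator $p$ — is still in the main loop at round $r_g$, and by definition of a good round they all select $p$. Provided that $p \notin S_v$ for every correct $v$ at round $r_g$, no correct node breaks at the termination check, so $p$ reaches \sref{ln:rc-opnbr} and broadcasts $\mt{opinion}(o_p)$. In round $r_g + 1$, every correct node is still in the main loop, receives $\mt{opinion}(o_p)$ from its previously selected coordinator $p$, and accepts it at \sref{ln:rc-opnac}. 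Since $p$ is correct, the accepted value is common across all correct nodes.

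The main obstacle is establishing the freshness condition $p \notin S_v$ at round $r_g$. My plan is to refine the structural argument in \sref{lem:rc-gdrnd}'s proof: the inequality $|C_v| > r_c$ derived there — valid up through any round with no prior good round and no prior re-selection — forces $v$'s selections across rounds $0, 1, \ldots, r_c - 1$ to lie in pairwise distinct slots of $C_v$. Combining this with minimality of $r_g$ rules out any prior occurrence of $p$ in $S_v$, completing the argument. The termination bound of $O(n)$ rounds is by contrast a straightforward counting argument once the two invariants are in place.
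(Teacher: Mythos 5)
Your proposal is correct and follows essentially the same route as the paper: termination by counting (each non-terminating iteration grows $S_v\subseteq C_v$ and $|C_v|\le n$), and the common-opinion round via \sref{lem:rc-gdrnd}. You are considerably more careful than the paper's two-line proof about the point you call ``freshness'' --- that no correct node, including the coordinator, may break at or before the first good round $r_g$, else the opinion is never broadcast or never accepted --- and your plan to discharge it by reusing the $|C_v|>r_c$ inequality is exactly what the lemma's proof already delivers once ``witnesses a good round'' is read as ``accepts the opinion of a common correct coordinator in the following round.''
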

\begin{proof}
As a node terminates as soon as it selects the same node as a coordinator and there are \(n\) nodes in total, the node terminates in at most \(n\) rounds. Using \sref{lem:rc-gdrnd}, the node also witnesses a good round before termination and accepts the corresponding opinion in the next round (\sref{ln:rc-opnac}).
\end{proof}

\section{Consensus}\label{sec:con}

\newcommand{\mtmx}{\mathit{max}}
\newcommand{\mtmn}{\mathit{min}}
In this section, we give an \(O(f)\) round consensus algorithm in the \emph{id-only} model, where \(f\) is the number of faulty byzantine nodes in the system. \sref{alg:earlyCon} gives an algorithm based on \cite{earlyStopKing}. Every correct node \(v\) has an input \(x_v\), which is a real number. Again, every correct node has to output a common correct value. If the inputs are all same, then the output must be that value. We consider real number inputs here, unlike binary inputs in \sref{sec:con}, since we use it later for ordering events in a system, which can be non-binary.

\begin{algorithm}[!htb]
\caption{An \(O(f)\) round consensus algorithm in the \emph{id-only} model. To initialize the rotor-coordinator in \protect\sref{ln:econInitRC}, run the first two lines of the \protect\sref{alg:rotor}. To initialize \(n_v\) in \protect\sref{ln:initn}, collect the identifiers from which a message has been received, and count them. Later, a node only accepts messages from a node if it counted towards \(n_v\) during the initialization and discards the messages from the other nodes. If a node \(u\) receives a message from another node \(v\) during initialization but not later inside the loop, then \(u\) assumes that \(v\) sent the same message as sent by \(u\) in the previous round. `Next Round' is abbreviated as N.R.}
\label{alg:earlyCon}
\begin{algorithmic}[1]
\Statex
\State Initialize rotor-coordinator. \Comment{Rounds 1 and 2} \label{ln:econInitRC}
\State Initialize \(n_v\). \label{ln:initn}
\While{\(\mt{true}\)}
\State Broadcast \(\mt{input}(x_v)\). \Comment{N.R.}
\If{Received at least \(2n_v/3\) \(\mt{input}(x_v)\)}  \Comment{N.R.}
\State Broadcast \(\mt{prefer}(x_v)\). \label{ln:econPrefer}
\EndIf
\If{Received at least \(n_v /3\) \(\mt{prefer}(x)\)} \Comment{N.R.}
\State \(x_v = x\) \label{ln:econSwitch}
\EndIf
\If{Received at least \(2n_v /3\) \(\mt{prefer}(x)\)}
\State Broadcast \(\mt{strongprefer}(x)\). \label{ln:econStrongprefer}
\EndIf
%\State \parbox[t]{\linewidth - \algorithmicindent - \widthof{ \(\triangleright\) N.R.}}{%
%Execute a round of rotor-coordinator using \(x_v\) as \(v\)'s current opinion. Let \(c\) be the value accepted as the coordinator's opinion.\strut}\Comment{Next Round}\label{ln:econRC}
\State \parbox[t]{\linewidth - \algorithmicindent}{Execute a round of rotor-coordinator using \(x_v\) as \(v\)'s current opinion. Let \(c\) be the value accepted as the coordinator's opinion.\Comment{N.R.}\strut}\label{ln:econRC}
%\State \parbox[t]{200pt}{Run a round rotor-coordinator using \(x_v\) as \(v\)'s opinion. Let \(c\) be the coordinator's opinion.\strut} \label{ln:econRC}
\If{Received less than \(n_v /3\) \(\mt{strongprefer}(x)\)\Comment{N.R.}\\\hspace{\algorithmicindent}}  
\State \(x_v = c\) \label{ln:econSwitchKing}
\EndIf
\If{Received at least \(2n_v/3\) \(\mt{strongprefer}(x)\)}
\State Terminate and output \(x\). \label{ln:econTerminate}
\EndIf
\EndWhile
\end{algorithmic}
\end{algorithm}

In the following, we prove the correctness of \sref{alg:earlyCon}. We refer to an iteration of the loop as a \emph{phase}.

\begin{lem}\label{lem:earlyConValidity}
If \(x_v = x\) for every correct node at the start of the phase, all the nodes terminate with the output \(x\) at the end of the phase.
\end{lem}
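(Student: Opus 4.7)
The proof is essentially a round-by-round trace of what happens in the phase under the hypothesis that every correct node starts with $x_v = x$. The core of the argument will be two counting inequalities that have already appeared implicitly in \sref{lem:rb-correct} and \sref{lem:rn-g1}. Let $g$ be the number of correct nodes and, for a correct node $v$, let $f_v'$ be the number of faulty nodes that have sent $v$ at least one message up to the current round. Since every correct node broadcasts in round 1, $n_v = g + f_v'$, and combining this with $g > 2f \ge 2f_v'$ yields simultaneously $g > 2n_v/3$ and $f_v' < n_v/3$. I will use these two inequalities at every message step.

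The plan is then to walk through the phase and track what each correct node sees. Since all correct nodes start with $x_v = x$, they all broadcast $\mt{input}(x)$. By the first inequality every correct $v$ receives at least $g > 2n_v/3$ copies of $\mt{input}(x)$ and so broadcasts $\mt{prefer}(x)$ in \sref{ln:econPrefer}. For the conditional update in \sref{ln:econSwitch}, I need to rule out a switch to some $x' \ne x$: no correct node broadcasts $\mt{prefer}(x')$, so at most $f_v' < n_v/3$ copies of $\mt{prefer}(x')$ can reach $v$ (this is exactly the content of \sref{lem:rn-g1} applied contrapositively), and therefore $x_v$ remains $x$. The same count $g > 2n_v/3$ shows every correct node also broadcasts $\mt{strongprefer}(x)$ in \sref{ln:econStrongprefer}.

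Next the rotor-coordinator round executes. The accepted coordinator opinion $c$ could be anything, since the coordinator selected in this phase may well be Byzantine, but this turns out to be irrelevant: at the checks on lines \ref{ln:econSwitchKing} and \ref{ln:econTerminate}, every correct node has received at least $g > 2n_v/3 > n_v/3$ copies of $\mt{strongprefer}(x)$. So the ``less than $n_v/3$'' guard in \sref{ln:econSwitchKing} is false for the value $x$, and no correct node overwrites $x_v$ with $c$; meanwhile the ``at least $2n_v/3$'' guard in \sref{ln:econTerminate} is satisfied for $x$, so every correct node terminates in this phase with output $x$.

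The main obstacle is the moment right before termination, where a malicious rotor-coordinator could try to drag nodes away from $x$ via $c$. Handling this cleanly requires being careful that the strongprefer counts are compared against $n_v/3$ for the value $x_v = x$ specifically, rather than against any value, and that one cannot have $\ge n_v/3$ copies of a competing $\mt{strongprefer}(x')$ coming in alongside (again via the $f_v' < n_v/3$ bound). Once those counting details are nailed down, the rest is a direct walkthrough.
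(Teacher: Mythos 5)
Your proposal is correct and follows essentially the same route as the paper's proof: a direct walkthrough of the phase using the count $g > 2n_v/3$ at each message step to show every correct node broadcasts and receives enough $\mt{input}(x)$, $\mt{prefer}(x)$, and $\mt{strongprefer}(x)$ messages to terminate with output $x$. Your extra care in ruling out a competing value $x'$ at \sref{ln:econSwitch} via $f_v' < n_v/3$ is a detail the paper's proof glosses over, but it does not change the argument.
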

\begin{proof}
Every correct node broadcasts \(\mt{input}(x)\) at the start of the phase. So, every correct node \(v\) receives \(g\) \(\mt{input}(x)\) messages. 
As \(n > 3f\), we have \(g > 2f\). 
Thus, we have \(g + 2g > 2(f+ g)\) or \(g > 2n/3 \geq 2 n_v/3\).
So, all the correct nodes broadcast \(\mt{prefer}(x)\) (\sref{ln:econPrefer}). 
Every correct node \(v\) receives \(g \geq 2n_v/3\) \(\mt{prefer}(x)\) messages, keeps their opinion to \(x\) (\sref{ln:econSwitch}), and broadcasts \(\mt{strongprefer}(x)\) (\sref{ln:econStrongprefer}).
Again, each correct node \(v\) receives \(g \geq 2n_v/3\) \(\mt{strongprefer}(x)\) messages and terminates with the output \(x\) (\sref{ln:econTerminate}).
\end{proof}

\begin{lem}\label{lem:quorum}
If a correct node \(u\) receives \(2n_u/3\) copies of a message \(m\) and a correct node \(v\) receives \(2n_v/3\) copies of a message \(m'\) in the same round, then at least one correct node sent both \(m\) and \(m'\) in the previous round.
\end{lem}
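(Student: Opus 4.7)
My plan is to count correct senders and apply an inclusion--exclusion argument inside the common set of correct nodes. Let $g$ be the number of correct nodes, and let $C^m$ and $C^{m'}$ denote the sets of correct nodes that sent $m$ and $m'$, respectively, in the previous round. Since correct nodes broadcast, every element of $C^m$ sent $m$ to $u$ and every element of $C^{m'}$ sent $m'$ to $v$. Both $C^m$ and $C^{m'}$ are subsets of the $g$-element set of correct nodes, so to exhibit a common correct sender it suffices to prove $|C^m| + |C^{m'}| > g$.

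Next I would lower bound each quorum in the style of \sref{lem:rn-g2}. Write $n_u = g + f_u'$ where $f_u'$ is the number of faulty nodes from which $u$ has received at least one message, and let $f_u''$ be the number of faulty nodes that sent $m$ to $u$ in the round under consideration, so $f_u'' \leq f_u'$. Subtracting possibly faulty senders from the $2n_u/3$ copies of $m$ gives
\[
|C^m| \;\geq\; \frac{2n_u}{3} - f_u'' \;=\; \frac{2(g + f_u')}{3} - f_u'' \;\geq\; \frac{2(g + f_u'')}{3} - f_u'' \;=\; \frac{2g - f_u''}{3}.
\]
The symmetric argument yields $|C^{m'}| \geq (2g - f_v'')/3$.

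Summing and using $f_u'' + f_v'' \leq 2f$, together with the hypothesis $n > 3f$ which gives $g > 2f$, I obtain
\[
|C^m| + |C^{m'}| \;\geq\; \frac{4g - (f_u'' + f_v'')}{3} \;\geq\; \frac{4g - 2f}{3} \;>\; g,
\]
so $C^m \cap C^{m'} \neq \emptyset$ and the lemma follows.

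The main subtlety is bookkeeping for the faulty counts. If one carelessly lower bounds $n_u$ by $g$ alone while subtracting the full $f$ for faulty senders, the resulting inequality $4g/3 - 2f > g$ forces $g > 6f$, far weaker than what is available. The trick, exactly as in \sref{lem:rn-g2}, is to expand $n_u = g + f_u'$ inside the $2n_u/3$ term so that the $f_u''$ one subtracts is partially refunded by the $f_u'$ one added; after this simplification the rest is just inclusion--exclusion.
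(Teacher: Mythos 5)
Your proof is correct and follows essentially the same route as the paper's: lower-bound the number of correct senders of $m$ and of $m'$ by $2n_u/3 - f_u''$ and $2n_v/3 - f_v''$ respectively, expand $n_u = g + f_u'$, sum the two bounds to exceed $g$, and conclude by pigeonhole. Your version is in fact slightly more careful than the paper's, which identifies the number of faulty $m$-senders with the number of faulty nodes counted in $n_u$ rather than passing through the inequality $f_u'' \leq f_u'$ as you do.
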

\begin{proof}
The number of messages \(G\) sent by the good nodes is at least \(2n_u/3 - f_u + 2n_v/3 - f_v\), where \(f_u\) is the number of \(m\) messages sent to \(u\) by the faulty nodes, and \(f_v\) is the number of \(m'\) messages sent to \(v\) by the faulty nodes.
As \(n_u = g + f_u\) and \(n_v = g + f_v\), we have \(G > 4g/3 - (f_u + f_v)/3\). 
We have \(g > f_u + f_v\) since \(f_u \leq f\), \(f_v \leq f\) and \(g > 2f\). 
Thus, we have \(G > g\) and at least one correct node sent both \(m\) and \(m'\) in the previous round.
\end{proof}

\begin{lem}\label{lem:earlyConTerminate}
If a correct node terminates in a phase, then all other correct nodes have the same opinion at the end of the phase.
\end{lem}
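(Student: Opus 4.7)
The plan is to leverage the strongprefer-quorum received by the terminating node in two directions: forward, to block the king-switch at \sref{ln:econSwitchKing} for every correct node, and backward, to force every correct node to adopt $x$ at \sref{ln:econSwitch}.

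Concretely, let $v$ be the correct node that terminates in the phase and outputs $x$, so $v$ received at least $2n_v/3$ strongprefer$(x)$ messages in the round where strongprefer is collected. By \sref{lem:rn-g2}, every correct node $u$ receives at least $n_u/3$ strongprefer$(x)$ messages in that same round. Hence the guarded condition at \sref{ln:econSwitchKing} (``less than $n_v/3$ strongprefer$(x)$'') fails at every correct $u$, so $u$ does not overwrite $x_u$ with the coordinator's value $c$. Consequently, the end-of-phase opinion of each correct $u$ equals the value it held right after executing \sref{ln:econSwitch}.

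To pin down that value, I trace back one round. By \sref{lem:rn-g1}, at least one correct node $w$ contributed a strongprefer$(x)$ to $v$'s quorum, so $w$ itself received at least $2n_w/3$ prefer$(x)$ messages in the previous round. Applying \sref{lem:rn-g2} to $w$'s quorum, every correct node $u$ receives at least $n_u/3$ prefer$(x)$ messages, so the switch at \sref{ln:econSwitch} fires at $u$ with value $x$.

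The delicate step is uniqueness: I need that the value $u$ actually writes at \sref{ln:econSwitch} is not some competing $y\neq x$. For this I would argue that if any correct node receives at least $n_u/3$ prefer$(y)$ messages, then by \sref{lem:rn-g1} some correct node sent prefer$(y)$, and must therefore have received at least $2n/3$ input$(y)$ messages in the preceding round. But \sref{lem:quorum} applied to the input round forces any two $\geq 2n/3$ input-quorums at correct nodes to agree on a common value, so $y=x$. Thus \sref{ln:econSwitch} is unambiguous and sets $x_u=x$ for every correct $u$, concluding the argument. The main obstacle I anticipate is precisely this uniqueness step: because the algorithm handles real-valued (rather than binary) inputs, multiple distinct prefer values could in principle cross the $n/3$ threshold simultaneously, and ruling this out cleanly requires invoking \sref{lem:quorum} at the input layer rather than at the prefer layer.
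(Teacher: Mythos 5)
Your proposal is correct and follows essentially the same route as the paper's proof: the forward use of \sref{lem:rn-g2} to block the coordinator-switch, the backward chain via \sref{lem:rn-g1} and \sref{lem:rn-g2} to force every correct node to adopt $x$ at \sref{ln:econSwitch}, and the resolution of the uniqueness issue by tracing two competing prefer values down to two input-quorums and invoking \sref{lem:quorum} there. The "delicate step" you flag is exactly the one the paper addresses, and in the same way.
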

\begin{proof}
Say, a correct node \(v\) terminates with the output \(x\).
Then, it received at least \(2n_v/3\) \(\mt{strongprefer}(x)\) messages.
So, all the correct nodes received at least \(n_v/3\) \(\mt{strongprefer}(x)\) messages using \sref{lem:rn-g2}
and none of them switches to the coordinator's opinion (\sref{ln:econSwitchKing}).
Moreover, at least one correct node \(u\) sent a \(\mt{strongprefer}(x)\) message using \sref{lem:rn-g1}.
The node \(u\) received \(2n_u/3\) \(\mt{prefer}(x)\) messages.
Using \sref{lem:rn-g2}, at least \(n_u/3\) of those messages were sent by the correct nodes and so each node received at least \(n_u/3\) \(\mt{prefer}(x)\) messages.
It is not possible that a correct node also received \(\mt{prefer}(x')\), where \(x \neq x'\).
Indeed, if it was so, then using \sref{lem:rn-g1} there is a correct node \(s\) that sent \(\mt{prefer}(x)\) and a correct node \(t\) that sent \(\mt{prefer}(x')\). 
Thus, node \(s\) received \(2n_s/3\) \(\mt{input}(x)\) messages and node \(t\) received \(2n_t/3\) \(\mt{input}(x')\) messages. 
Using \sref{lem:quorum}, a correct node sent both \(\mt{input}(x)\) and \(\mt{input}(x')\) messages in the same round, a contradiction.
Thus, every correct node changed their opinion to \(x\) (\sref{ln:econSwitch}), which remains unchanged until the end of the phase.
\end{proof}

\begin{lem}\label{lem:earlyConKing}
If the coordinator is correct and none of the correct nodes have terminated, then all the correct nodes have the same opinion by the end of the phase.
\end{lem}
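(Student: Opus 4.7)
The plan is to trace how opinions are updated during the phase and show that by the end every correct node holds the same value. The linchpin is a ``unique prefer value'' observation combined with the three quorum lemmas (\sref{lem:rn-g1}, \sref{lem:rn-g2}, \sref{lem:quorum}) and the coordinator guarantee of \sref{thm:rc}.

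First I would establish: in this phase there is at most one value \(\hat{x}\) for which any correct node broadcasts \(\mt{prefer}(\hat{x})\) in \sref{ln:econPrefer}. If correct nodes \(u\) and \(t\) had sent \(\mt{prefer}(x)\) and \(\mt{prefer}(x')\) with \(x \neq x'\), then \(u\) received at least \(2n_u/3\) copies of \(\mt{input}(x)\) and \(t\) received at least \(2n_t/3\) copies of \(\mt{input}(x')\) in the preceding round; \sref{lem:quorum} would then force some correct node to have broadcast both, contradicting the fact that each correct node transmits only its own single \(\mt{input}\) message. Via \sref{lem:rn-g1}, it follows that whenever any correct node receives at least \(n_v/3\) messages of the form \(\mt{prefer}(x)\) or \(\mt{strongprefer}(x)\), the value \(x\) must equal \(\hat{x}\); in particular the update in \sref{ln:econSwitch}, when taken by any correct node, can only set the opinion to \(\hat{x}\).

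Next I would case split on whether any correct node received enough strongprefers. In Case 1, some correct node \(v^{*}\) receives at least \(n_{v^{*}}/3\) copies of \(\mt{strongprefer}(\hat{x})\). By \sref{lem:rn-g1} a correct node \(u\) broadcast \(\mt{strongprefer}(\hat{x})\), so \(u\) must have received at least \(2n_u/3\) copies of \(\mt{prefer}(\hat{x})\); \sref{lem:rn-g2} then guarantees that every correct node \(w\), including the coordinator \(K\), received at least \(n_w/3\) copies of \(\mt{prefer}(\hat{x})\) and executed \(x_w \gets \hat{x}\) in \sref{ln:econSwitch}. Hence \(K\) enters the rotor step with opinion \(\hat{x}\), and \sref{thm:rc} gives \(c = \hat{x}\) at every correct node. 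A correct node that sees at least \(n_v/3\) strongprefers retains \(\hat{x}\); one that does not overwrites with \(c = \hat{x}\) in \sref{ln:econSwitchKing}; either way the final opinion is \(\hat{x}\). In Case 2, no correct node receives at least \(n_v/3\) strongprefers of any value; then every correct node reaches \sref{ln:econSwitchKing} and overwrites its opinion with the common coordinator value \(c\), again producing agreement.

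The step I expect to be most delicate is Case 1, specifically identifying the value \(\hat{x}\) that survives at the ``strongprefer retention'' branch with the value \(c\) delivered by the rotor. Uniqueness of \(\hat{x}\) together with \sref{lem:rn-g2} synchronizes the two branches; without either, a correct node retaining some \(\mt{strongprefer}(x)\) could in principle disagree with a correct node that switched to a different \(c\). The remaining work is a direct invocation of the three quorum lemmas and the rotor-coordinator theorem.
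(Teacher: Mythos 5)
Your proposal is correct and follows essentially the same route as the paper's proof: case-split on whether some correct node receives at least $n_v/3$ $\mathit{strongprefer}$ messages, then in the nontrivial case chain \sref{lem:rn-g1} and \sref{lem:rn-g2} to show every correct node (including the coordinator) gets a $\mathit{prefer}(x)$ quorum for a unique $x$ and switches to it, so both the ``retain'' and ``adopt $c$'' branches yield the same value. Your only cosmetic difference is stating the uniqueness-of-the-preferred-value observation (via \sref{lem:quorum}) up front rather than inline, as the paper does with its ``as before'' reference.
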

\begin{proof}
Consider the first phase when the coordinator is correct.
Either every correct node \(v\) receives less than \(n_v/3\) \(\mt{strongprefer}(x)\) messages, in which case all the correct nodes have the same opinion by the end of the phase, and we are done. 
Otherwise, there is a correct node \(u\) that received at least \(n_u/3\) \(\mt{strongprefer}(x)\) messages.
Using \sref{lem:rn-g1}, there is at least one correct node \(w\) that sent a \(\mt{strongprefer}(x)\) message.
Thus, node \(w\) received at least \(2n_w/3\) \(\mt{prefer}(x)\) messages.
Using \sref{lem:rn-g2}, every correct node \(v\) received at least \(n_v/3\) \(\mt{prefer}(x)\) messages. 
As before, it is impossible that a correct node \(v\) also receives \(n_v/3\) \(\mt{prefer}(x')\) messages, where \(x \neq x'\).
So, every correct node, including the coordinator, changes its opinion to \(x\) (\sref{ln:econSwitch}).
Since the coordinator is correct, it sends the same opinion \(x\) to all the nodes.
Thus, even if some correct nodes decide to change their opinion to the coordinator's opinion, all the correct nodes still have the same opinion at the end of the phase.
\end{proof}

We can now combine the previous lemmas into the following theorem.

\begin{thm}\label{thm:earlyCon}
\sref{alg:earlyCon} solves consensus in \(O(f)\) rounds in the \emph{id-only} model.
\end{thm}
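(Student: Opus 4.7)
The plan is to combine the three phase-level lemmas \sref{lem:earlyConValidity}, \sref{lem:earlyConTerminate}, and \sref{lem:earlyConKing} with the rotor-coordinator analysis from \sref{sec:rotor} to establish validity, agreement, and the $O(f)$-round termination bound for \sref{alg:earlyCon}.

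I would first dispatch validity: if every correct node starts the first phase with the same input $x$, then \sref{lem:earlyConValidity} applied to that phase immediately yields termination with output $x$. Next, I would observe a ``contagion'' structure that drives agreement: once any correct node terminates in some phase $k$, \sref{lem:earlyConTerminate} gives that every correct node holds a common opinion $x$ at the end of phase $k$, and applying \sref{lem:earlyConValidity} to phase $k+1$ then forces every remaining correct node to terminate with output $x$. Combined, these two observations reduce the remaining work to bounding the first phase in which some correct node terminates.

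To obtain the $O(f)$ bound on that phase, I would argue that within $O(f)$ phases the rotor-coordinator must experience a \emph{good round} --- one in which every correct node picks the same correct coordinator. In such a phase, \sref{lem:earlyConKing} yields a common opinion at the end of the phase, and the contagion argument above then forces termination one phase later. Since each phase of \sref{alg:earlyCon} consumes only a constant number of communication rounds, this produces the $O(f)$ total round bound.

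The main obstacle is establishing the $O(f)$-phase bound on the first good rotor round, since \sref{thm:rc} only gives the coarser $O(n)$ bound. Overcoming this requires reusing the finer bookkeeping from the proof of \sref{lem:rc-gdrnd}: non-silent rotor rounds are limited because each such round permanently adds a faulty identifier that propagates to every correct $C_u$ via \sref{lem:rc-relay}, and among silent rounds the modular indexing into $C_v$ can select at most $|F_v| = O(f)$ distinct faulty coordinators before hitting a correct one. A secondary subtlety worth checking is the interaction with early terminators: once a correct node leaves the loop it ceases to broadcast, and the convention described in the algorithm's caption (a missing message is treated as a repeat of the previous round's message) must be verified to preserve the $n_v/3$ and $2n_v/3$ threshold arguments underlying the three phase-level lemmas, so that they remain applicable to the still-running nodes after some have already output.
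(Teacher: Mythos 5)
Your proposal is correct and follows essentially the same route as the paper's proof: validity from \sref{lem:earlyConValidity}, the termination-contagion argument from \sref{lem:earlyConTerminate} followed by \sref{lem:earlyConValidity}, and the guarantee of a good coordinator round combined with \sref{lem:earlyConKing}. You are in fact more careful than the paper on the one delicate point --- the paper simply asserts that a correct coordinator is picked within \(O(f)\) rounds, whereas you correctly note that \sref{thm:rc} as stated only gives \(O(n)\) and that the \(O(f)\) bound must be extracted from the silent/non-silent round counting inside the proof of \sref{lem:rc-gdrnd} (at most \(2f\) non-silent plus at most \(f\) silent rounds before a good round); your flag about verifying the message-substitution convention for already-terminated nodes is likewise a legitimate detail the paper leaves implicit.
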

\begin{proof}
If the correct nodes have the same input \(x\), then they output \(x\) using \sref{lem:earlyConValidity}.
Otherwise, one of the following happens within \(O(f)\) rounds: either a correct node terminates with an output \(x\) or a correct coordinator gets picked.
In either case, the correct nodes have the same opinion at the end of the phase, and terminate with the same output by the next round using \sref{lem:earlyConValidity}.
\end{proof}
\section{Approximate Agreement}\label{sec:approxAgree}
In the approximate agreement problem \cite{approxAgree}, each correct node takes a real number input and outputs a real number. Let \(i_\mtmn\) and \(i_\mtmx\), respectively, be the minimum and the maximum value that is an input of a correct node. Similarly, let \(o_\mtmn\) and \(o_\mtmx\), respectively, be the minimum and maximum value that is output by a correct node. The values output by the correct nodes must satisfy the following conditions.
\begin{enumerate}
    \item The value output by each correct node is within the \emph{input range} \([i_\mtmn, i_\mtmx]\).
    \item The output range  \([o_\mtmn, o_\mtmx]\) is strictly smaller than the input range, i.e., \((o_\mtmx - o_\mtmn) < (i_\mtmx - i_\mtmn)\) if 
    \(i_\mtmx \neq i_\mtmn\).
\end{enumerate}

\begin{algorithm}[!htb]
\caption{Approximate Agreement algorithm for a node $v$. The input value of the node is \(i_v\).}
\label{alg:approxAgree}
\begin{algorithmic}[1]
\Statex
\State Broadcast \(i_v\) to all the nodes (including self).
\State Let \(R_v\) be the set of received values and \(n_v = |R_v|\).
\State Discard \(\lfloor n_v/3\rfloor\) smallest and \(\lfloor n_v/3\rfloor\) largest values from the set \(R_v\) to obtain the set \(S_v\).
\State Output \(o_v=(\min{S_v} + \max{S_v})/2\), where \(\min{S_v}\) and \(\max{S_v}\)  are the minimum and maximum value of the set \(S_v\) respectively.
\end{algorithmic}
\end{algorithm}
\sref{alg:approxAgree} solves the problem. The following lemma shows that the algorithm satisfies the first property of the approximate agreement, i.e., the output range lies within the input range.

\begin{lem}\label{lem:aaWithin}
If \(n > 3f\), then \(o_v \in [i_\mtmn, i_\mtmx]\) for every correct node \(v\).
\end{lem}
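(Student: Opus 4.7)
The plan is to show separately that $\min S_v \geq i_{\mathit{min}}$ and $\max S_v \leq i_{\mathit{max}}$; since $o_v$ is the average of these two quantities, it will then automatically lie in $[i_{\mathit{min}}, i_{\mathit{max}}]$. By the symmetry between ``largest'' and ``smallest'' in the trimming step, it suffices to argue one of the two bounds in detail, so I will focus on $\max S_v \leq i_{\mathit{max}}$.

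The argument proceeds by contradiction. Suppose $\max S_v > i_{\mathit{max}}$. Let $g = n-f$ denote the number of correct nodes and let $f_v$ denote the number of Byzantine nodes whose values appear in $R_v$, so $n_v = g + f_v$ (every correct node broadcasts its input, hence contributes to $R_v$). Every correct node's value is at most $i_{\mathit{max}}$, so any value strictly greater than $i_{\mathit{max}}$ in $R_v$ must have come from a Byzantine node. By the definition of $S_v$, the $\lfloor n_v/3 \rfloor$ largest elements of $R_v$ were discarded, and $\max S_v$ is the next largest element. Since $\max S_v > i_{\mathit{max}}$, all of these $\lfloor n_v/3 \rfloor + 1$ values exceed $i_{\mathit{max}}$, and hence all originate from Byzantine nodes. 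This yields $f_v \geq \lfloor n_v/3 \rfloor + 1$.

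I will then derive the contradiction from $n > 3f$. Since $f_v \leq f$, we have $g = n - f > 2f \geq 2f_v$, and since these are integers, $g \geq 2f_v + 1$. Therefore $n_v = g + f_v \geq 3f_v + 1$, which gives $\lfloor n_v/3 \rfloor \geq f_v$, contradicting $f_v \geq \lfloor n_v/3 \rfloor + 1$. The symmetric argument shows $\min S_v \geq i_{\mathit{min}}$. Combining the two bounds, $o_v = (\min S_v + \max S_v)/2 \in [i_{\mathit{min}}, i_{\mathit{max}}]$.

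The main subtlety — and the only point where care is really needed — is the interaction between the floor function and the counting of Byzantine contributions to $R_v$. One might naively worry that $\lfloor n_v/3 \rfloor$ is too small to ``absorb'' all $f_v$ potentially out-of-range Byzantine values; the key observation that makes the bound tight is that $n_v$ itself already includes $f_v$, so $n_v \geq 3 f_v + 1$ is automatic from $n > 3f$, and this is precisely enough to guarantee $\lfloor n_v / 3 \rfloor \geq f_v$.
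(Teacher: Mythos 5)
Your proof is correct and follows essentially the same route as the paper: both arguments reduce to the key inequality $\lfloor n_v/3\rfloor \ge f_v$ (derived identically from $n>3f$, $f_v\le f$, and $n_v=g+f_v$), which guarantees the trimming step removes every out-of-range Byzantine value; you merely phrase the final step as a contradiction where the paper argues directly.
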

\begin{proof}
Let \(g\) be the number of correct nodes.
Then, the node \(v\) receives at least \(g\) values from the correct nodes after the first round.
Let \(f_v\) be the number of values received by \(v\) from the Byzantine nodes.
Therefore, we have \(f_v \leq f\) as \(f\) is the number of faulty nodes and \(v\) receives at most one value from each faulty node in a round.
As \(n = f + g\) and \(f_v \leq f\), we can rewrite \(n > 3f\) as \(g + f > 2f_v + f\). Thus, we have \((g + f_v)/3 > f_v\) or \(\lfloor(g + f_v)/3\rfloor \geq f_v\) as \(f_v\) is an integer. As \(n_v = g + f_v\), we have \(\lfloor n_v/3\rfloor \geq f_v\). 

As there are at most \(f_v\) faulty values in the set \(R_v\) and \(\lfloor n_v/3\rfloor \geq f_v\), the minimum value \(\min{S_v}\) left after discarding \(\lfloor n_v/3\rfloor\) smallest values from \(R_v\) satisfies \(\min{S_v} \geq i_\mtmn\), where \(i_\mtmn\) is the minimum value received from a correct node.
Using a similar argument, the maximum value \(\max{S_v}\) satisfies \(\max{S_v} \leq i_\mtmx\).
Therefore, the output \(o_v\), which is the average of \(\min{S_v}\) and \(\max{S_v}\), satisfies \(o_v \in [i_\mtmn, i_\mtmx]\).  
\end{proof}

\newcommand{\mtmd}{\mathit{med}}
Let \(i_\mtmd\) be the median of the input values at the correct nodes.
In the following lemma, we show that the value \(i_\mtmd\) is never discarded by a correct node while computing the set \(S_v\).

\begin{lem}\label{lem:aaMed}
If \(n > 3f\), then the value \(i_\mtmd \in S_v\) for every correct node \(v\).
\end{lem}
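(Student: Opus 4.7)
The plan is to count, in the multiset $R_v$ of values received by a correct node $v$, how many lie on each side of $i_\mtmd$, and to show that both the number of values $\le i_\mtmd$ and the number of values $\ge i_\mtmd$ strictly exceed $\lfloor n_v/3 \rfloor$. Once this is established, at least one copy of $i_\mtmd$ must survive both the bottom and the top trimming, giving $i_\mtmd \in S_v$.

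First, I would use the definition of median to argue that, among the $g$ correct inputs, at least $\lceil g/2 \rceil$ of them are $\le i_\mtmd$ and at least $\lceil g/2 \rceil$ of them are $\ge i_\mtmd$. Since every correct node broadcasts its input in the first round, all $g$ correct inputs appear as entries of $R_v$, so the same lower bounds transfer to $R_v$: $|\{r \in R_v : r \le i_\mtmd\}| \ge \lceil g/2 \rceil$ and $|\{r \in R_v : r \ge i_\mtmd\}| \ge \lceil g/2 \rceil$.

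Next, I would compare $\lceil g/2 \rceil$ with $\lfloor n_v/3 \rfloor$ using the same ingredients as in \sref{lem:aaWithin}: from $n_v = g + f_v \le g + f$ together with $g > 2f$, we obtain $n_v < 3g/2$, hence $\lfloor n_v/3 \rfloor < g/2 \le \lceil g/2 \rceil$. Then I would conclude by contradiction: if $i_\mtmd \notin S_v$, then either every copy of $i_\mtmd$ in $R_v$ was trimmed from the bottom, or every copy was trimmed from the top. In the bottom case, the number of elements of $R_v$ that are $\le i_\mtmd$ would be at most $\lfloor n_v/3 \rfloor$, contradicting the lower bound just established; the top case is symmetric. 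Hence $i_\mtmd \in S_v$.

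The main obstacle I anticipate is handling ties carefully: multiple nodes (including Byzantine ones) may report the exact value $i_\mtmd$, so the bookkeeping must be phrased in terms of the number of values $\le$ and $\ge i_\mtmd$ in the multiset rather than using strict inequalities. Also, the integer inequality $\lfloor n_v/3 \rfloor < \lceil g/2 \rceil$ barely holds near the threshold $g = 2f+1$ and needs a clean floor/ceiling argument to avoid an off-by-one error.
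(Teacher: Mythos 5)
Your proof is correct and follows essentially the same route as the paper's: both arguments reduce to the key inequality $\lfloor n_v/3\rfloor \le (g+f)/3 < g/2$ and then observe that, since at least half of the $g$ correct inputs lie on each side of $i_\mtmd$ in $R_v$, neither the bottom nor the top trim can remove every copy of the median. Your version merely makes the tie-handling and the final contradiction more explicit than the paper does; the underlying idea is identical.
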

\begin{proof}
Let \(g\) be the number of correct nodes.
Using \(n > 3f\) and \(n = g + f\), we get \(f < g/2\).
Using \(n_v = g + f_v\) and \(f_v \leq f\), we get
\(\lfloor n_v/3 \rfloor  \leq  n_v/3 = (g + f_v)/3 \leq  (g+f)/3\).

As \(f < g/2\), we get \(\lfloor n_v/3\rfloor < g/2\). 
Therefore, even if all the smallest \(\lfloor n_v/3\rfloor\) discarded values are from the good nodes, then also strictly less than half of the smallest good values are discarded to obtain the set \(S_v\). Similarly, strictly less than half of the largest good values are discarded to obtain the set \(S_v\).
Thus, we have \(i_\mtmd \in S_v\).
\end{proof}

Combining the previous two lemmas, we can state the following theorem.

\begin{thm}
If \(n > 3f\), then \sref{alg:approxAgree} achieves approximate agreement in the \emph{id-only} model.
\end{thm}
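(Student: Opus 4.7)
The plan is to combine the two preceding lemmas almost directly. The first property of approximate agreement --- that the output of every correct node lies in \([i_\mtmn, i_\mtmx]\) --- is exactly the conclusion of \sref{lem:aaWithin}, so nothing further needs to be shown for containment. What remains is to establish the strict range-contraction, and for that the key observation is that \sref{lem:aaMed} gives a single value, the median \(i_\mtmd\) of the correct inputs, that survives trimming in \emph{every} correct node's set \(S_v\) simultaneously.

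Given that observation, I would argue as follows. Because \(i_\mtmd \in S_v\), we have \(\min S_v \leq i_\mtmd \leq \max S_v\). Combined with \(\min S_v \geq i_\mtmn\) and \(\max S_v \leq i_\mtmx\) (which were established inside the proof of \sref{lem:aaWithin}), this traps \(\min S_v \in [i_\mtmn, i_\mtmd]\) and \(\max S_v \in [i_\mtmd, i_\mtmx]\). Averaging the two bounds shows that \(o_v = (\min S_v + \max S_v)/2\) lies in the interval \([(i_\mtmn + i_\mtmd)/2,\ (i_\mtmd + i_\mtmx)/2]\) for every correct \(v\).

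Since this interval does not depend on which correct node we consider, \([o_\mtmn, o_\mtmx]\) is contained in it, so \(o_\mtmx - o_\mtmn \leq (i_\mtmx - i_\mtmn)/2\); this is strictly less than \(i_\mtmx - i_\mtmn\) whenever \(i_\mtmx \neq i_\mtmn\), yielding the second property.

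There is no real obstacle here: the two lemmas do the heavy lifting. The only subtlety worth stating explicitly in the write-up is that \(i_\mtmd\) is defined from the correct inputs alone and is therefore a \emph{common} witness across all correct nodes --- it is precisely this common witness that upgrades the per-node bounds on \(\min S_v\) and \(\max S_v\) into a global contraction factor of \(1/2\) on the range.
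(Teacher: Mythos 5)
Your proof is correct and follows essentially the same route as the paper's: property (1) from \sref{lem:aaWithin}, then using the common median witness \(i_\mtmd \in S_v\) from \sref{lem:aaMed} to trap every \(o_v\) in \([(i_\mtmn + i_\mtmd)/2,\ (i_\mtmd + i_\mtmx)/2]\) and conclude the factor-\(1/2\) contraction. (If anything, your \(o_\mtmx - o_\mtmn \leq (i_\mtmx - i_\mtmn)/2\) is stated more carefully than the paper's equality.)
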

\begin{proof}
Using \sref{lem:aaWithin}, the output range lies within the input range and the first property of the approximate agreement is satisfied.

Using \sref{lem:aaMed}, we have \(i_\mtmd \in S_v\). Thus, we have \(\max{S_v} \geq i_\mtmd\) and that \(\min{S_v} \leq i_\mtmd\). 
Moreover, using \sref{lem:aaWithin}, we also get that \(\min{S_v} \geq i_\mtmn\) and \(\max{S_v} \leq i_\mtmx\). 
Therefore, we have that the average \(o_v = (\min{S_v} + \max{S_v})/2\) lies within the range \([(i_\mtmn + i_\mtmd)/2, (i_\mtmd + i_\mtmx)/2]\). 
So, the size of output range \((o_\mtmx - o_\mtmn) = (i_\mtmx - i_\mtmn)/2 < (i_\mtmx - i_\mtmn)\) if \(i_\mtmx \neq i_\mtmn\).
\end{proof}

%First, every node \(v\) broadcasts its input \(i_v\) to all the nodes including self. Let \(n_v\) be the number of values that \(v\) receives. Then, the node \(v\) discards \(\lfloor n_v/3 \rfloor\) highest and \(\lfloor n_v/3 \rfloor\) smallest values out of the received ones, and outputs the average of the maximum and the minimum of the ones that are left. Since \(f \leq n_v/3\), one can show that the median correct value is never discarded and the correct output range is at least halved. \sref{app:approxAgree} contains the full analysis.

\section{Synchrony is Necessary}\label{sec:sync}
In our work, we have assumed that the system is synchronous. Intuitively, this is a necessary assumption as a node does not know \(n\) and \(f\) and hence, the number of messages to wait for before deciding. So, it might end up deciding before receiving a message that was delayed for long, as such the decision might be incorrect. The following lemma proves this for consensus.

\begin{lem}
In an asynchronous system where the number of nodes \(n\) and an upper bound \(f\) on the number of failures is not known to the nodes, consensus is impossible, even with probabilistic termination.
\end{lem}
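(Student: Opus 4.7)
The plan is to derive a contradiction via an indistinguishability argument that exploits the fact that, in the \emph{id-only} model, a correct node cannot tell an ``absent'' node apart from one whose messages are merely delayed. Suppose for contradiction that an algorithm $A$ solves Byzantine consensus in the asynchronous \emph{id-only} model with probabilistic termination, i.e., it terminates with probability $1$ in every admissible execution. Since $n$ and $f$ are unknown to the nodes, $A$ must be correct for every system size and in particular in the failure-free case ($f = 0$).

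First, I would fix any integer $n \geq 1$ and build two reference executions. In $E_0$ there are $n$ correct nodes, all with input $0$, running under a benign schedule with unit message delays; by validity and termination with probability $1$, for every $\epsilon > 0$ there is a time bound $T$ by which every node has decided $0$ with probability at least $1 - \epsilon$. Symmetrically I would define $E_1$ over a disjoint set of $n$ correct nodes, all with input $1$, and take the same $T$ for both by maximization.

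Next I would construct a single execution $E^*$ with $2n$ correct nodes partitioned into two sets $A$ and $B$ of size $n$ each, where every node in $A$ has input $0$ and every node in $B$ has input $1$. In $E^*$, messages internal to $A$ are delivered exactly as in $E_0$, messages internal to $B$ exactly as in $E_1$, and every message crossing between $A$ and $B$ is delayed past time $T$; asynchrony makes this schedule admissible. Because a correct node in the \emph{id-only} model learns of the existence of another node only upon receiving a message from it, the local execution transcript of any node in $A$ up to time $T$ in $E^*$ is identically distributed to its transcript in $E_0$, and symmetrically for $B$ against $E_1$. Coupling the random tapes of the nodes across the three executions, with probability at least $1 - 2\epsilon$ every node in $A$ decides $0$ and every node in $B$ decides $1$ by time $T$ in $E^*$. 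Choosing $\epsilon < 1/2$ then yields a strictly positive-probability event in which two correct nodes decide differently, contradicting agreement (note that $E^*$ has no Byzantine nodes at all, so the full agreement property applies to every pair).

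The step I expect to be the main obstacle is the indistinguishability/coupling itself: one must argue rigorously that delaying all inter-partition messages beyond $T$ is a legitimate asynchronous schedule $A$ is bound to tolerate, and that the random tapes of the nodes in $E^*$ can be coupled with those in $E_0$ and $E_1$ so that identical local views through time $T$ induce identically distributed actions. Once this coupling is pinned down and $T$ is extracted from probabilistic termination, the violation of agreement is immediate and needs no further calculation.
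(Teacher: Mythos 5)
Your proposal is essentially the paper's own proof: both partition a failure-free system into two input-homogeneous halves, delay all cross-partition messages, and observe that a node knowing only its identifier cannot distinguish the combined system from one in which the other half is absent, so each half decides its own input value with non-zero probability and agreement fails. Your version is in fact somewhat more careful than the paper's, which asserts the positive-probability disagreement directly without spelling out the time bound $T$ extracted from probabilistic termination or the coupling of random tapes across the three executions.
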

\begin{proof}
Assume a system \(\mathcal{S}\) in which all the nodes are correct.
We partition the set of the nodes into sets \(A\) and \(B\).
A node \(v\) has input \(1\) if \(v \in A\); input \(0\) if \(v \in B\).
The messages between \(A\) and \(B\) are arbitrarily delayed.
To a node \(v \in A\), this is indistinguishable from a system \(\mathcal{A}\) where the nodes in \(B\) are absent, as \(v\) only knows its id initially in both \(\mathcal{S}\) and \(\mathcal{A}\).
Similarly, system  \(\mathcal{S}\) is indistinguishable  to a node \(v \in B\) from a system \(\mathcal{B}\)  where the nodes \(A\) are absent.
The nodes \(A\) decide \(0\) in  the system \(\mathcal{A}\) with a non-zero probability, since they only hear from the nodes with the input \(0\).
Similarly, the nodes \(B\) decide \(1\) in the system \(\mathcal{B}\) with a non-zero probability.
So, the nodes in the system \(\mathcal{S}\) decide on different values with a non-zero probability.
\end{proof}

Similar problems can happen in a \emph{semi-synchronous} system \cite{dwork}, where the message delays have a fixed upper bound $\Delta$, but its value is unknown to the nodes. However, the previous argument does not work since we cannot arbitrarily delay the messages due the existence of the fixed upper bound $\Delta$. Instead, we start with the partitions $\mathcal{A}$ and $\mathcal{B}$ and inductively build an invalidating execution for a union of them.

\begin{lem}
In a semi-synchronous system, where the message delays have a fixed upper bound $\Delta$ and the nodes do not know the value of $\Delta$, $n$ and $f$, consensus is impossible, even with probabilistic termination. 
\end{lem}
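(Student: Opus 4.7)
The plan is to turn the asynchronous impossibility argument into a semi-synchronous one by exploiting the fact that \(\Delta\) is unknown to the nodes: the adversary gets to pick \(\Delta\) after the algorithm has been fixed, and chooses it large enough that each side of a partition can run in isolation long enough to commit its output before any cross-partition message is allowed to arrive.

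Concretely, I would assume for contradiction that some algorithm \(\Pi\) solves consensus under the stated assumptions, and partition the nodes (all correct) into two nonempty disjoint sets \(A\) (every node has input \(1\)) and \(B\) (every node has input \(0\)). Consider two \emph{isolated} executions \(E_A\) and \(E_B\) in which only nodes of one side participate, under some fixed intra-partition schedules whose delays are bounded by constants \(\Delta_A, \Delta_B\). In each, \(\Pi\) must terminate with probability \(1\), so there exist finite times \(t_A, t_B\) such that with probability at least \(1/2\) every node of \(A\) (resp.\ \(B\)) has decided by time \(t_A\) (resp.\ \(t_B\)) in \(E_A\) (resp.\ \(E_B\)); by validity those decisions are \(1\) and \(0\) respectively.

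Next I would build the invalidating execution \(E\) on \(A \cup B\) inductively in time. Set \(\Delta := \max(\Delta_A, \Delta_B, t_A, t_B) + 1\), reuse the schedules of \(E_A\) and \(E_B\) for intra-partition messages, and deliver every cross-partition message exactly \(\Delta\) time units after it is sent. All delays are at most \(\Delta\), so \(E\) is a legal semi-synchronous execution. A straightforward induction on time shows that up to time \(t_A\) every node in \(A\) has received exactly the same messages in \(E\) as in \(E_A\), because any cross-message would need delay at least \(\Delta > t_A\) to arrive; since \(\Pi\) cannot depend on \(\Delta\), the random coins and decisions of \(A\) up to \(t_A\) are distributionally identical in \(E\) and \(E_A\), and the analogous statement for \(B\) holds symmetrically.

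The main obstacle, and the reason the asynchronous argument does not port verbatim, is certifying that the delay schedule is admissible for the semi-synchronous model; the inductive construction above handles this precisely by choosing \(\Delta\) only after fixing \(t_A\) and \(t_B\), which is legitimate because the algorithm is oblivious to \(\Delta\). To finish, I would note that the local coins of \(A\) and of \(B\) are independent random sources and, through time \(\min(t_A, t_B)\), neither side's trajectory in \(E\) depends on the other's coins; hence the event that every node in \(A\) decides \(1\) while every node in \(B\) decides \(0\) occurs in \(E\) with probability at least \(1/2 \cdot 1/2 = 1/4 > 0\), contradicting agreement.
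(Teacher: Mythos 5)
Your proposal is correct and follows essentially the same route as the paper: run the two partitions in isolation with inputs $1$ and $0$, extract finite (high-probability) termination times, then choose the unknown bound $\Delta$ strictly larger than those times and delay all cross-partition messages by $\Delta$ so that each side's view is indistinguishable from its isolated execution until after it has decided. The only difference is bookkeeping --- you carry explicit probability quantiles and coin independence to get a $1/4$ disagreement probability, whereas the paper fixes particular positive-probability finite executions $E_a, E_b$ --- and both are sound.
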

\begin{proof}
Consider a system $\mathcal{A}$ where all the nodes have input $1$ and the message delays are at most  $\Delta_a$.
Each node $v \in \mathcal{A}$ decides $1$ with non-zero probability.
Let $E_a$ be such an execution in $\mathcal{A}$ of duration $T_a$.
Similarly, consider another system $\mathcal{B}$ where all the nodes have input $0$ and the message delays are at most $\Delta_b$.
Let $E_b$ be an execution in $\mathcal{B}$ where all the nodes decide $0$ in duration $T_b$.
We consider another system $\mathcal{S}$ consisting of $|\mathcal{A}|+ |\mathcal{B}|$ nodes, and set the maximum message delay $\Delta_s > \max(\Delta_a, T_a, \Delta_b, T_b)$.
We partition the set $\mathcal{S}$ into a set $A$ of $|\mathcal{A}|$ nodes and a set $B$ of $|\mathcal{B}|$ nodes.
The nodes in $A$ have input $1$ where as the nodes in $B$ have input $0$.
We also assume some bijective mapping between the sets $A$ and $\mathcal{A}$ and between the sets $B$ and $\mathcal{B}$.
We use $a'$ denote the counterpart of $a$ in this bijective map.

We construct an execution $E_s$ from $E_a$ and $E_b$ as follows.
If a node $a \in \mathcal{A}$ sends a message to a node $b \in \mathcal{A}$, then $a' \in \mathcal{S}$ sends the same message to $b'$. 
The message sent in $\mathcal{S}$ has the same delay as the message sent in $\mathcal{A}$.
If a node $a \in \mathcal{A}$ \emph{broadcasts} a message to all the nodes $\mathcal{A}$, then $a' \in \mathcal{S}$  broadcasts the same message to all the nodes $\mathcal{S}$.
The delays for the messages broadcast are assigned as follows.
The message delay in $\mathcal{S}$ for the messages broadcast to the nodes $A \subset \mathcal{S}$ is same as the delay of those messages in $\mathcal{A}$.
The message delay in $\mathcal{S}$ for the messages broadcast to the nodes $B \subset \mathcal{S}$ are $\Delta_s$. 
Similarly, we assign message actions and delays to the nodes $B \subset \mathcal{S}$.
Inductively, a node $a \in A \subset \mathcal{S}$ makes the same decisions as a node $a' \in \mathcal{A}$, since both of them do not know the value of $n$ and $f$, and node $a$ makes the (same) decision before it even hears from a node in $B$.
Similarly, a node $b \in B \subset \mathcal{S}$ makes the same decisions as a node $b' \in \mathcal{B}$. 
Therefore, there is an execution $E_s$ in $\mathcal{S}$ so that $a \in \mathcal{S}$ decides $1$ and $b \in \mathcal{S}$ decides $0$, a disagreement.
\end{proof}

The above argument essentially means that an agreement protocol designed to work without the knowledge of $n$ and $f$ (such as the Bitcoin blockchain \cite{bitcoin}), either must assume synchronous execution for guaranteed agreement or sacrifice agreement with some probability. 
%Similar problems can happen in an semi-synchronous system where there exists an up 

\section{Parallel Consensus}\label{sec:earlycon}

\newcommand{\rt}{{:}}
In the consensus problem, each correct node had only one opinion and had to output a single opinion in agreement with other nodes. Later, when a correct node can submit multiple opinions, we need to agree on every opinion submitted by a correct node.
Therefore, we consider the \emph{parallel consensus} problem: Every correct node \(v\) has a set of \(k_v\) input pairs \((\mt{id}_v^i, x_v^i)\) for \(1 \leq i \leq k_v\), where \(x_v^i\) is an opinion and \(\mt{id}_v\) is the identifier of the input pair. Each correct node outputs a set of pairs subject to the following conditions.
\begin{enumerate}
    \item Validity: If \((id, x)\) is an input pair of every correct node and \(x \neq \bot\), then all the correct nodes must output the pair \((id,x)\).
    \item Agreement: If a correct node \(v\) outputs a pair \((id_v, x_v)\), then all other correct nodes must output \((id_v, x_v)\) as well. 
    \item Termination: Every correct node outputs a set of pairs in finite number of rounds.
\end{enumerate}
Note that the rules allow a pair \((id_v, x_v)\) as an input of a correct node \(v\), but not all the correct nodes, and be absent from the output of every correct node. 

 First, we describe the \(\mt{EarlyConsensus(id)}\) algorithm, where every correct node \(v\) has at most one input pair \((\mt{id}, x_v)\), i.e., all nodes may not be aware of the identifier \(\mt{id}\). The pseudocode is given in \sref{alg:earlyConWithId}. To help a node \(v\) distinguish if another node \(u\) is aware of \(\mt{id}\) or has no preference or no strong preference of an opinion, we use \(\mt{id\rt nopreference}\) and \(\mt{id\rt nostrongpreference}\) messages.  

\begin{algorithm}[!htb]
\caption{\(\mt{EarlyConsensus(id)}\) algorithm at node \(v\): The node has at most one input pair \((\mt{id}, x_v)\). The rotor-coordinator and \(n_v\) are initialized as in \protect\sref{alg:earlyCon}. Later, a node only accepts messages from a node if it counted towards \(n_v\) during the initialization and discards the messages from the other nodes. 
The types \(M = \{\mt{id\rt input}, \mt{id\rt prefer}, \mt{id\rt strongprefer}\}\) of received messages are counted as follows. If a message of type \(m\in M\) is received for the \emph{first} time during the second phase, then it is discarded (considered as not received). If a message of type \(m \in M\) is received for the \emph{first} time during the first phase, then the message \(m(\bot)\) is substituted for every node \(u\) that counted towards \(n_v\) during initialization but did not send a type \(m\) message. If a node \(v\) has received a type \(m \in M\) message already during the first phase and a node \(u\) that counted towards \(n_v\) does not send a type \(m' \in M\) message in a subsequent round, then for every such node \(u\),
the node \(v\) substitutes the message of type \(m'\) that it sent most recently.
`Next Round' is abbreviated as N.R.}
\label{alg:earlyConWithId}
\begin{algorithmic}[1]
\Statex
\State Initialize rotor-coordinator. \Comment{Rounds 1 and 2} \label{ln:econIdInitRC}
\State Initialize \(n_v\). \label{ln:econIdinitn}
\While{\(\mt{true}\)}
\If{Input pair \((\mt{id}, x_v)\) present and \(x_v \neq \bot\)}
\State Broadcast \(\mt{id\rt input}(x_v)\). \Comment{N.R.}
\EndIf
\If{Received at least \(2n_v/3\) \(\mt{id\rt input}(x_v)\)}  \Comment{N.R.}\label{ln:econIdInput}
\State Broadcast \(\mt{id\rt prefer}(x_v)\). \label{ln:econIdPrefer}
\Else
\State Broadcast \(\mt{id\rt nopreference}\).  \label{ln:noprefer}
\EndIf
\If{Received at least \(n_v /3\) \(\mt{id\rt prefer}(x)\)} \Comment{N.R.}\label{ln:econIdPrefer1}
\State \(\mt{id}\rt x_v = x\) \label{ln:econIdSwitch}
\EndIf
\If{Received at least \(2n_v /3\) \(\mt{id\rt prefer}(x)\)}\label{ln:econIdPrefer2}
\State Broadcast \(\mt{id\rt strongprefer}(x)\). \label{ln:econIdStrongprefer}
\Else
\State Broadcast \(\mt{id\rt nostrongpreference}\). \label{ln:nostrongprefer}
\EndIf
\State \parbox[t]{\linewidth - \algorithmicindent}{%
Execute a round of rotor-coordinator using \(x_v\) as \(v\)'s current opinion. Let \(c\) be the value accepted as the coordinator's opinion. \Comment{N.R.}\strut}\label{ln:econRC}
%\State \parbox[t]{200pt}{Run a round rotor-coordinator using \(x_v\) as \(v\)'s opinion. Let \(c\) be the coordinator's opinion.\strut} \label{ln:econRC}
\If{Received less than \(n_v /3\) \(\mt{id\rt strongprefer}(x)\)\Comment{N.R.}\\\hspace{\algorithmicindent}} \label{ln:econIdSprefer1}
\State \(\mt{id}\rt x_v = c\) \label{ln:econSwitchKing}
\EndIf
\If{Received at least \(2n_v/3\) \(\mt{id\rt strongprefer}(x)\)}\label{ln:econIdSprefer2}
\State Terminate and output \((\mt{id}, x)\) if \(x \neq \bot\).\label{ln:econIdTerminate}
\EndIf
\EndWhile
\end{algorithmic}
\end{algorithm}

Next, we describe the \(\mt{ParallelConsensus}\) algorithm using the previous one: The node \(v\) starts the \(\mt{EarlyConsensus(id_v)}\) algorithm for every \((\mt{id_v}, x_v)\) pair input at \(v\). If the node \(v\) first hears \(\mt{id\rt input}\), \(\mt{id\rt prefer}\), \(\mt{id\rt strongprefer}\) respectively during the second, third, and fifth round of the first phase and no input pair corresponding to \(\mt{id}\) was present at \(v\), then also the node \(v\) starts the \(\mt{EarlyConsensus(id)}\) algorithm from that round.

\begin{thm}\label{thm:parCon}
The \(\mt{ParallelConsensus}\) algorithm satisfies the parallel consensus properties.
\end{thm}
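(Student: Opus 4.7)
The plan is to verify Validity, Agreement, and Termination in turn by lifting the corresponding results for \(\mt{EarlyConsensus}\) via \sref{thm:earlyCon} and the counting lemmas \sref{lem:rn-g1} and \sref{lem:rn-g2}.

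Validity is immediate: if every correct node has \((id, x)\) with \(x \neq \bot\) as input, each starts \(\mt{EarlyConsensus}(id)\) in round~1 of phase~1 with opinion \(x\), and the argument of \sref{lem:earlyConValidity} transfers verbatim with id-prefixed messages to give termination with output \((id, x)\) by the end of phase~1. Termination follows because each correct node runs only finitely many \(\mt{EarlyConsensus}(id)\) instances (bounded by its input size plus the number of distinct ids it hears in a phase-1 trigger window), and each instance terminates in \(O(f)\) rounds by \sref{thm:earlyCon}, since \(\mt{EarlyConsensus}(id)\) is structurally \(\mt{EarlyConsensus}\) with id-prefixed messages and a \(\bot\)-substitution that affects only count bookkeeping, not the round bound.

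For Agreement, suppose a correct node \(v\) terminates \(\mt{EarlyConsensus}(id)\) with output \((id, x)\), \(x \neq \bot\), in phase \(p\). The crux is to show that every correct node participates in \(\mt{EarlyConsensus}(id)\) and ends phase \(p\) with opinion \(x\). Peeling back with repeated applications of \sref{lem:rn-g1}: \(v\) received \(2n_v/3\) copies of \(\mt{id\rt strongprefer}(x)\), so some correct node broadcast \(\mt{id\rt strongprefer}(x)\) in phase \(p\); that node received \(2n_w/3\) copies of \(\mt{id\rt prefer}(x)\), yielding a correct sender of \(\mt{id\rt prefer}(x)\); which in turn received \(2n_u/3\) copies of \(\mt{id\rt input}(x)\), yielding a correct sender of \(\mt{id\rt input}(x)\) in round~1 of some phase. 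A correct node broadcasts \(\mt{id\rt input}(x)\) with \(x \neq \bot\) only if its current opinion is \(x\); chasing the opinion back through switches at \sref{ln:econIdSwitch} (each invoking \sref{lem:rn-g1} on the threshold \(n_v/3\) of \(\mt{id\rt prefer}(x)\) messages, then \(2n_v/3\) of \(\mt{id\rt input}(x)\)), the chain must terminate at a correct node \(s\) with initial input \((id, x)\), since in round~1 of phase~1 no switching has yet occurred. Then \(s\) broadcasts \(\mt{id\rt input}(x)\) in round~1 of phase~1, and every correct node receives it in round~2: those without their own input are activated by the phase-1 trigger, and those with input are already participating. With every correct node participating in \(\mt{EarlyConsensus}(id)\), the id-prefixed analogues of \sref{lem:earlyConTerminate} and \sref{lem:earlyConValidity} apply: every correct node ends phase \(p\) with opinion \(x\) and terminates outputting \((id, x)\) by the end of phase \(p + 1\).

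The main obstacle will be making the backward trace fully rigorous under the \(\bot\)-substitution rule. A correct node that joined late in phase~1 can itself broadcast \(\mt{id\rt prefer}(\bot)\) or \(\mt{id\rt strongprefer}(\bot)\) purely from virtual \(\bot\) substitutions, without any genuine non-\(\bot\) activity, so the chain must track the specific value \(x \neq \bot\): each application of \sref{lem:rn-g1} must concern the exact non-\(\bot\) message \(\mt{id\rt strongprefer}(x)\), \(\mt{id\rt prefer}(x)\), or \(\mt{id\rt input}(x)\), to which virtual \(\bot\) substitutions do not contribute. I expect the induction on phase (each switch to opinion \(x\) in phase \(p'\) forces, by \sref{lem:rn-g1}, a correct sender of \(\mt{id\rt prefer}(x)\) in an earlier round of phase \(p'\), and hence a correct sender of \(\mt{id\rt input}(x)\) with prior opinion \(x\)) to close cleanly, with the base case being round~1 of phase~1 where the only way to have opinion \(x\) is to hold \((id, x)\) as input.
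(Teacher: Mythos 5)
Your Validity and Termination arguments are fine, but the Agreement argument has a genuine gap, and it sits exactly where the paper does most of its work. The backward trace is broken: you chase the output value \(x\) back only through the switches at \sref{ln:econIdSwitch}, but a correct node can also adopt the coordinator's opinion at \sref{ln:econSwitchKing} whenever it sees fewer than \(n_v/3\) \(\mt{id\rt strongprefer}\) messages. A Byzantine coordinator can inject an arbitrary value \(c\) there; in the next phase every correct node broadcasts \(\mt{id\rt input}(c)\) and, by the validity-style argument within that phase, may terminate with output \((\mt{id},c)\) even though no correct node ever held \((\mt{id},c)\) as input. So the claim that the chain ``must terminate at a correct node \(s\) with initial input \((\mt{id},x)\)'' is false, and with it your conclusion that some correct node broadcast \(\mt{id\rt input}(x)\) in round 1 of phase 1 --- which is the step you use to activate all correct nodes and to justify applying the analogues of \sref{lem:earlyConTerminate} and \sref{lem:earlyConValidity}.

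Independently, you never treat the case in which \emph{no} correct node has an input pair with identifier \(\mt{id}\) and the identifier is injected solely by Byzantine nodes; this is where the bulk of the paper's proof lives. The paper splits on whether the first \(\mt{id}\)-tagged message reaches a correct node in round 2, 3, or 5 of phase 1 (anything later is discarded by the rules in the caption of \sref{alg:earlyConWithId}) and shows in each case that every correct node converges to \(\bot\) and terminates without outputting an \(\mt{id}\) pair, so Agreement holds vacuously there. In the complementary case (some correct node holds \((\mt{id},x)\) with \(x\neq\bot\)), the paper's vehicle is the observation that the \(\mt{id\rt nopreference}\) and \(\mt{id\rt nostrongpreference}\) messages make the run of \(\mt{EarlyConsensus(id)}\) literally identical to a run of \sref{alg:earlyCon} with inputs \(x_u\) or \(\bot\), after which Agreement is inherited wholesale from \sref{thm:earlyCon} rather than re-derived by a value trace. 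I would restructure your Agreement proof around that dichotomy (``some correct node holds an \((\mt{id},\cdot)\) input'' versus ``none does'') instead of tracing the output value backwards.
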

\begin{proof}
Consider a pair \((\mt{id}, x_v)\) that is input at a correct node \(v\), where \(x_v \neq \bot\).
In the first round of the phase, the node \(v\) broadcasts \(\mt{id\rt input}(x_v)\).
So, every correct node hears an \(\mt{id\rt input}\) message in the second round, and fills the missing opinions from the correct nodes with a \(\mt{id\rt input(\bot)}\).
In the subsequent rounds, if a correct node \(u\) does not receive enough messages to send a \(\mt{id\rt prefer}\) or a \(\mt{id\rt strongprefer}\) message, then it respectively sends a \(\mt{id \rt nopreference}\) and \(\mt{id\rt nostrongpreference}\) message.
So, the node \(v\) does not fill in a message for \(u\).
Therefore, the execution of \(\mt{EarlyConsensus(id_v)}\) is identical to an execution of \sref{alg:earlyCon}, where the input of a correct node \(v\) is \(x_v\) if \((\mt{id_v}, x_v)\) is an actual input and \(\bot\) if such a pair is absent.
Using \sref{thm:earlyCon}, every correct node \(v\) outputs a pair \((\mt{id}, o_v)\) in \(O(f)\) rounds, so that it is in agreement with other correct nodes, and is same as the input \((\mt{id}, x_v)\) if it is present at all the correct nodes.
Discarding the output pairs of the form \((\mt{id}, \bot)\) does not affect the agreement and validity properties required by parallel consensus (\sref{ln:econIdTerminate}).

Now, consider that no correct node has an input pair with the identifier \(\mt{id}\).
If we show that no correct node outputs a pair with the identifier \(\mt{id}\), then we are done.
Let \(r\) be the first round when a correct node \(v\) receives an \(\mt{id}\) message.
If \(r\) is the second phase, or the fourth round (rotor-coordinator) of the first phase, then \(v\) simply discards it. 
Otherwise, the round \(r\) can be the second (\sref{ln:econIdInput}), third (Lines~\ref{ln:econIdPrefer1} and \ref{ln:econIdPrefer2}) or the fifth one (Lines~\ref{ln:econIdSprefer1} and \ref{ln:econIdSprefer2}) of the first phase.
First, consider that \(r\) is the second round of the first phase and a correct node \(v\) first received the \(\mt{id\rt input}\) message during round \(r\).
Since no other correct node \(u\) had an \(\mt{id}\) pair as input, node \(v\) fills a default \(\mt{id\rt input}(\bot)\) for every correct node \(u \neq v\) and decides to broadcast \(\mt{id}\rt prefer(\bot)\).
Similarly, any other correct node \(w \neq v\) that first received the \(\mt{id\rt input}\) message in the round \(r\) broadcasts \(\mt{id}\rt prefer(\bot)\).
In the next round, every correct node receives an \(\mt{id\rt prefer(\bot)}\) message.
If a correct node heard \(\mt{id\rt prefer}\) message for the first time, then it will fill a default \(\mt{id\rt prefer(\bot)}\) for every node \(u\) that did not send a message to it.
If a correct node \(p\) already heard an \(\mt{id}\) message, then we know that it sent \(\mt{id\rt prefer(\bot)}\) in the previous round and will fill the same for missing opinions.
Thus, every correct node \(p\) receives at least \(2n_p/3\) \(\mt{id\rt prefer(\bot)}\) messages, sets \(\mt{id}\rt x_p = \bot\) and broadcasts \(\mt{id\rt strongprefer(\bot)}\).
So, every correct node \(p\) receives at least \(2n_p/3\) \(\mt{id\rt strongprefer(\bot)}\) in the next round, terminates but does not output an \(\mt{id}\) pair since \(\bot\) is the associated opinion.

Now, consider that \(r\) is the third round of the first phase and  a correct node \(v\) first hears an \(\mt{id\rt prefer}\) message in the round \(r\).
The node \(v\) fills a default \(\mt{id\rt prefer(\bot)}\) opinion for every correct node \(u\), sets \(\mt{id\rt x_v = \bot}\) and broadcasts \(\mt{id\rt strongprefer(\bot)}\).
In the next round, every correct node hears an  \(\mt{id\rt strongprefer(\bot)}\) message.
If a correct node \(w\) hears \(\mt{id\rt strongprefer(\bot)}\) for the first time, it fills the missing messages with the default \(\mt{id\rt strongprefer(\bot)}\) message.
If not, the node \(w\) fills the missing opinion with what it sent previously, which is again \(\mt{id\rt strongprefer(\bot)}\).
Thus, every correct node \(w\) receives at least \(2n_w/3\)  \(\mt{id\rt strongprefer(\bot)}\) messages and does not output any \(\mt{id}\) pair.

Lastly, consider that \(r\) is the fifth round of the first phase and a correct node \(v\) first hears an \(\mt{id\rt strongprefer}\) message in the round \(r\). 
No correct node received an \(\mt{id}\) before the round \(r\) by assumption, so no correct node sends an \(\mt{id}\) message before round \(r\).
So, the node \(v\) fills the default \(\mt{id\rt strongprefer(\bot)}\) message for every correct node \(u\).
Consequently, the node \(v\) receives \(2n_v /3 \) \(\mt{id\rt strongprefer(\bot)}\) messages and does not output an \(\mt{id}\) pair.
\end{proof}

\section{Application to Dynamic Networks}\label{sec:dynamicNetworks}
In this section, we see how the protocols that we developed can be applied to networks, where the participants enter or leave the system, subject to the constraint that \(n > 3f\). 
First, we look into the approximate agreement problem. We use \sref{alg:approxAgree} in the dynamic setting as well. 
It is easy to observe that the Lemmas~\ref{lem:aaWithin} and \ref{lem:aaMed} apply even if the participants enter and leave the system in every round subject to \(n > 3f\). So, the range of correct values still gets halved in every round, with respect to the previous round. However, new nodes entering the system might also increase the range of values at the correct nodes. So, whether the range decreases or increases over time depends on the actual inputs of nodes entering or leaving the system.

Next, we consider the problem of total ordering of events in a dynamic system. 
We can run the parallel consensus algorithm in every round to agree on the events occurred during that round.
We just need to make sure that the set of identifiers used for every parallel consensus instance remains consistent.
To do that, we have to specify some more details about the model. 
The adversary can decide the number of nodes that can join the network \emph{before} every round starts, subject to the constraint that \(n > 3f\) remains true when the round starts.
Once a node joins the network, it can broadcast to all the nodes that have joined but not left already.
A node leaves the network by announcing so to all the participants.
A correct node decides itself when to leave.
The adversary decides when a faulty node leaves the network.
\sref{alg:orderDynamic} lists the pseudocode.

% so, what is the dynamic system model.
% Initially, if two nodes join at the same time, can they hear each other?
% Well, lets assume that they cannot,
% Then, if they are part of 
\begin{algorithm}[!htb]
\caption{Algorithm at a node \(v\) to order events in a dynamic network. Initially, round \(r\) is initialized to \(0\) and \(S = \{v\}\). Since there could be multiple parallel consensus instances running at the same time, we identify them by the round in which they start by appending the round number to the messages. 
Also, running a parallel consensus instance with respect to \(S\) means recording the value of \(S\) at the start of the instance, and only accepting the messages from the node identifiers in \(S\), discarding the rest.}
\label{alg:orderDynamic}
\begin{algorithmic}[1]
\Statex
\If{\(v\) wants to participate}
\State Broadcast \(\mt{present}\). \Comment{Next Round}
\State \parbox[t]{\linewidth - \algorithmicindent}{
Let \(A_v\) be the multiset of \((\mt{ack}, t)\) messages received by \(v\) in the next round, where \(t \geq 0\). \strut}
\State \parbox[t]{\linewidth - \algorithmicindent}{ 
Initialize \(r = r_0 + 1\), where \((\mt{ack}, r_0)\) is the majority in \(A_v\).\strut}
\State \parbox[t]{\linewidth - \algorithmicindent}{ Initialize \(S\) to the identifiers which sent a message in \(A_v\). \strut}
\EndIf
\While{\(\mt{true}\)}
\State \(r \leftarrow r + 1\)
\State \(I_v^r \leftarrow \{\}\)
\If{Received \(\mt{present}\) from \(u\)}
\State \(S \leftarrow S \cup \{u\}\)
\State Send \((\mt{ack}, r)\) to \(u\). \Comment{Next Round}
\EndIf
\If{\(v\) wants to stop participating}
\State Broadcast \(\mt{absent}\). \Comment{Next Round}
\State 
\parbox[t]{\linewidth - \algorithmicindent}{ 
Wait and participate in the outstanding parallel \\consensus instances until termination. 
\strut}
\EndIf
\If{Received \(\mt{absent}\) from \(u\)}
\State \(S \leftarrow S \backslash \{u\}\)
\EndIf
\If{\(v\) witnesses an event \(m \neq \bot\)}
\State Broadcast \((m, r)\). \Comment{Next Round}
\EndIf
\If{Received \((m,r-1)\) from \(u\)}
\State \(I_v^r \leftarrow  I_v^r \cup \{(u,m)\}\)
\EndIf
\State
\parbox[t]{\linewidth - \algorithmicindent}{
Start a parallel consensus instance \(r\) with the input pairs \(I_v^r\) with respect to the set \(S\).\Comment{Next Round}\strut}\label{ln:toConsensus}
\State A round \(r' < r\) is final if \(r - r' > 5|S_v^{r'}|/2 + 2\).
\State 
\parbox[t]{\linewidth - \algorithmicindent}{
Let \(R\) be the largest round such that all the rounds at most \(R\) are final. \label{ln:toR}
\strut}
\State 
\parbox[t]{\linewidth - \algorithmicindent}{
Order the outputs of the consensus instances with identifiers at most \(R\) in the order of increasing identifiers, breaking ties arbitrarily. \label{ln:toOutput}
\strut}
\EndWhile
\end{algorithmic}
\end{algorithm}

In the following, we show that the nodes agree on the sequences that they output in \sref{ln:toOutput}. Let \(T_v^r\) be the sequence output by a correct node \(v\) at the end of round \(r\) (\sref{ln:toOutput}). Our goal is that \(T_v^r\) satisfies the following two agreement properties.
\begin{enumerate}
    \item Chain-prefix: For any pair of correct nodes \(u,v\), either \(T_u^r\) is a prefix of \(T_v^r\) or \(T_v^r\) is a prefix of \(T_u^r\).
    \item Chain-growth: For every correct node \(v\), events are appended to \(T_v^r\) over time, if a correct node submits an event in every round. 
\end{enumerate}

\begin{thm}
\sref{alg:orderDynamic} outputs a chain of events that satisfy the chain-prefix and chain-growth properties.  
\end{thm}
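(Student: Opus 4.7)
The plan is to reduce both chain properties to the validity and agreement guarantees of parallel consensus (\sref{thm:parCon}), combined with a timing argument that makes the finality rule reliable.

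First, I would verify that the finality threshold $r - r' > 5|S_v^{r'}|/2 + 2$ is long enough to guarantee that the parallel consensus instance launched in round $r'$ has terminated at every correct node participating in it. Each instance reduces to parallel runs of \sref{alg:earlyConWithId}, which finishes in $O(f)$ phases by the analysis behind \sref{thm:earlyCon}, each phase consuming a constant number of rounds plus one step of the rotor-coordinator whose total length is $O(|S_v^{r'}|)$ by \sref{thm:rc}. Since $f < |S_v^{r'}|/3$ under $n > 3f$ on the participant set when the instance starts, the bound $5|S_v^{r'}|/2 + 2$ should cover the worst case, and I would double-check this constant against the round accounting of \sref{alg:earlyConWithId}.

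For the chain-prefix property, fix correct nodes $u, v$ and round $r$, and let $R_u, R_v$ denote the respective values computed at \sref{ln:toR}; assume without loss of generality $R_u \leq R_v$. Both $T_u^r$ and $T_v^r$ are ordered concatenations of the output multisets of parallel consensus instances with identifiers at most $R_u$ and $R_v$ respectively, with ties broken deterministically. For each instance identifier $r' \leq R_u$, finality at $u$ guarantees the instance has terminated at every correct node, so the agreement property of parallel consensus forces $u$ and $v$ to output exactly the same pair set for instance $r'$. Therefore $T_u^r$ coincides with the prefix of $T_v^r$ obtained from instances with identifiers at most $R_u$, establishing the property.

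For chain-growth, suppose a correct node $w$ witnesses and broadcasts an event $(m, r)$ in round $r$. Then every correct node with $w$ in its current view of $S$ records $(w, m)$ in $I^r$ and submits it to instance $r$. By the validity clause of \sref{thm:parCon}, every correct node outputs $(w, m)$ from instance $r$, so once round $r$ becomes final at a node it contributes $(w, m)$ to the output chain. Since the finality threshold is $O(|S_v^r|)$ rounds, the event will appear in $T_v^{r'}$ for all sufficiently large $r'$, yielding chain-growth. The main obstacle here, and the chief subtlety of the whole proof, is reconciling drifting views of the participant set $S$ across correct nodes when an instance is launched: the inner protocol filters messages by $S$, and a Byzantine node can play join/leave games that leave correct nodes with slightly different sets at the start of the same instance. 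I would argue that correct-node views can differ only by Byzantine identifiers and at most one round of pending joiners or leavers, that the $n > 3f$ invariant on each $S_v^{r'}$ is preserved by the stated adversary restriction, and that the substitution and default-message rules built into \sref{alg:earlyConWithId} absorb the remaining slack, so that agreement and validity hold on the effective input set rather than on each node's raw view.
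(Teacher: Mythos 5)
Your proposal follows essentially the same route as the paper: use the finality threshold $5|S_v^{r'}|/2+2 > 5f_{r'}+2$ (via $|S_v^{r'}| \geq g_{r'} > 2f_{r'}$) to guarantee that every final instance has terminated at all correct nodes, then derive chain-prefix from the agreement property of \sref{thm:parCon} applied to all instances up to $\min\{R_u,R_v\}$, and chain-growth from the $O(f_{r'})$-round termination of each instance. The only step the paper includes that you leave implicit is the short induction showing that all correct nodes compute the same round number $r$ from the majority of $(\mathit{ack},t)$ messages, so that every node tags each parallel consensus instance with the same identifier; conversely, the subtlety you flag about correct nodes holding slightly different views of $S$ when an instance starts is genuine, but the paper's own proof does not resolve it either, so this does not separate your argument from theirs.
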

\begin{proof}
Initially, the node \(v\) stores the correct round number \(0\). 
By assumption, we have \(n > 3f\) in every round.
Then, by induction on rounds, selecting the round number based on the majority of received \(\mt{ack}\) messages always returns the correct round number for every correct node.
Therefore, every correct node that starts a parallel consensus instance in a round \(r\), tags it with the same identifier \(r\).
Each of these instances are then correct using \sref{thm:parCon}.

Consider a round \(r'\) that is final with respect to \(v\).
Since each phase of \sref{alg:earlyConWithId} is five rounds and the initialization is two rounds, the parallel consensus instance \(r'\) terminates by \(r' + 5 f_r' + 2 \) rounds using \sref{thm:parCon}, where \(f_r'\) is the number of faulty nodes in the round \(r'\).
Let \(g_r'\) be the number of good nodes in the round \(r'\) and \(n_r'\) be the total number of nodes in the round \(r'\).
Since we have \(n_r' > 3f_r'\) by assumption, we have \(|S_v^{r'}| \geq g_r' > 2f_r'\). 
Since \(r'\) is final, the current round \(r > r' + 5|S_v^{r'}|/2 + 2 > r' +5f_r'+2\).
So, the parallel consensus instance \(r'\) has terminated by the previous round and no further output from consensus instance \(r'\) is produced.
Moreover, using \sref{thm:parCon}, any other correct node \(u\neq v\) has also accepted the same output pairs corresponding to the consensus instance \(\mt{id}\). 
Also, node \(u\) has not accepted any other output pairs corresponding to the consensus instance \(\mt{id}\), which would contradict the agreement property of parallel consensus.
Let \(R_u\) and \(R_v\) respectively be the value of \(R\) computed in \sref{ln:toR} by the nodes \(u\) and \(v\).
Then, rounds up to \(R_\mt{min} = \min\{R_u, R_v\}\) are final for both the nodes \(u\) and \(v\).
Thus, the outputs of the consensus instances up to \(R_\mt{min}\) is the common prefix of \(T_u^r\) and \(T_v^r\), which is the common-prefix property. 

Since the parallel consensus instance \(r'\) terminates in \(O(f_r')\) rounds, the earliest non-final round eventually becomes final and the chain-growth property is satisfied as well.
\end{proof}

\iffalse
Building upon the previous algorithms, we can easily devise agreement algorithms for the dynamic networks where the participants enter and leave the system under the influence of an adversary. In the Appendix, we give the algorithms for approximate agreement and total ordering in dynamic networks. The dynamic version of approximate agreement is essentially the same as the static one as the required properties continue to hold in the dynamic case as well. 

The total ordering algorithm makes the correct nodes output a sequence or chain of values such that they satisfy the following properties: (1) If $a$ and $b$ are the current chains of any two correct nodes, then either $a$ is a prefix of $b$ or $b$ is a prefix of $a$. (2) The chain length at every correct node increases over time provided that the a correct node submits a value in every round.  The idea is to use an instance of parallel consensus in every round by tagging these messages with the starting round number of the instance. The full algorithm and the proof is in the Appendix. 
\fi

\section{Discussion}\label{sec:disc}
% forging identifiers might be possible in reliable broadcast as the nodes need to identify the nodes across different rounds.

% why ability to broadcast to all the nodes? Otherwise, partition in the network and we cannot do much.

% joining the system, broadcast abstraction is one way, other way: talk to another node, he gives you the list of identifiers to broadcast

%recap results
In this paper, we investigated distributed systems where the participants are
neither aware of the size $n$ nor the safe estimate $f$ of Byzantine
failures. We examined fundamental distributed computing problems such as,
approximate agreement, reliable broadcast, rotor-coordinator and consensus; concluding that all of them can be solved with the optimal resiliency of \(n > 3f\). 
Each of these algorithms illustrated a different method of computing. 
It is interesting to note that ``replacing'' \(f\) by \(n_v/3\) works in these algorithms although \(n_v/3\) is an incorrect upper bound on the number of failures. 
An algorithm using a combination of some of the discussed primitives could be ``compiled'' to work without the knowledge of \(n\) and \(f\) keeping resiliency unaffected. 
We evaluated resiliency in this work but other metrics such as message complexity, round complexity, etc. do not change much either. For example, the message complexity of reliable broadcast is unaffected compared to the original algorithm, the convergence rate of the approximate agreement algorithm remains unchanged 
and the \(O(f)\) round complexity of consensus algorithm is optimal \cite{fischer:roundLbConsensus}.

%The rotor-coordinator has no previous counterparts, like what would be the ideal comparison, say against the version where n and f are known, and one has to select a set of f + 1 nodes, what would be the round or message complexity? 
Removing knowledge of \(n\) and \(f\) from the participants has other benefits too.
For example, we show in \sref{sec:dynamicNetworks} that the design of agreement algorithms for dynamic networks becomes much easier and the nodes do not need to agree on the number of participants in the network.
It also opens up ways to achieve agreement in networks without using information from every node. For example, consider a set of nodes that are in approximate agreement with each other already and a new node joins. Then, the new node can execute \sref{alg:approxAgree} only with a subset of nodes to get closer to the value of most of the nodes.
Self-stabilizing algorithms may not need to restore the value of \(n\) and \(f\). 

It is unclear if the resiliency of rotor-coordinator is optimal, a question left for further work. Also, one could look if these techniques could benefit semi-synchronous or asynchronous dynamic systems where the rate of change of \(n\) is controlled, since without having any knowledge about $n$ or $f$ guaranteed agreement is impossible in such systems.

%\vspace{0.17cm}
%\noindent
\section{Acknowledgments} 
We would like to thank Christoph Lenzen for the discussions, reading the draft and suggesting improvements.
\bibliography{byz}

\end{document}